\DeclareSIUnit[]{\pu}{p.u.}
\DeclareSIUnit[]{\VA}{VA}
\DeclareSymbolFont{bbold}{U}{bbold}{m}{n}
\DeclareSymbolFontAlphabet{\mathbbold}{bbold}
\DeclarePairedDelimiterX\Set[2]{\lbrace}{\rbrace}%
{ #1 \,\delimsize| \,\mathopen{} #2 }
\newcommand{\real}[0]{\mathbb R}
\newcommand*\circled[1]{\tikz[baseline=(char.base)]{\node[shape=circle,draw,inner sep=0.05pt] (char) {#1};}}
\newtheorem{theorem}{Theorem}
\newtheorem{remark}{Remark}
\newtheorem{lemma}{Lemma}
\newtheorem{assumption}{Assumption}
\newtheorem{definition}{Definition}
\newtheorem{prop}{Proposition}
\newtheorem*{design*}{Generalized PI Controller}
\definecolor{bleudefrance}{rgb}{0.19, 0.55, 0.91}
\definecolor{ao(english)}{rgb}{0.0, 0.5, 0.0}
\definecolor{color1}{rgb}{0.94, 0.97, 1.0}
\newcommand{\yan}[1]{  \ifthenelse{\boolean{showcomments}}
{\todo[inline,color=bleudefrance]{Yan says: #1}}{}}
\newcommand{\wc}[1]{  \ifthenelse{\boolean{showcomments}}
{\todo[inline,color=color1]{WC: #1}}{}}
\title{Structured Neural-PI Control with End-to-End Stability and Output Tracking Guarantees}
\author{
	Wenqi Cui$^1$~~~Yan Jiang$^1$~~~Baosen Zhang$^1$~~~Yuanyuan Shi$^2$
	\and
	$^1$University of Washington, WA 98195~~~$^2$University of California San Diego, CA 92093\\
	\texttt{wenqicui@uw.edu}
	\enskip
	\texttt{jiangyan@uw.edu}
	\enskip
	\texttt{zhangbao@uw.edu}
	\enskip
	\texttt{yyshi@eng.ucsd.edu} 
}
\begin{document}

\maketitle

\begin{abstract}
 We study the optimal control of multiple-input and multiple-output dynamical systems via the design of neural network-based controllers with  stability and output tracking guarantees. While neural network-based nonlinear controllers have shown superior performance in various applications, their lack of provable guarantees has restricted their adoption in high-stake real-world applications. This paper bridges the gap between neural network-based controllers and the need for stabilization guarantees. Using equilibrium-independent passivity, a property present in a wide range of physical systems, we propose neural Proportional-Integral (PI) controllers that have provable guarantees of stability and zero steady-state output tracking error. The key structure is the strict monotonicity on proportional and integral terms, which is parameterized as gradients of strictly convex neural networks (SCNN). We construct SCNN with tunable softplus-$\beta$ activations, which yields universal approximation capability and is also useful in incorporating communication constraints. In addition, the SCNNs serve as Lyapunov functions, giving us end-to-end performance guarantees.  Experiments on traffic and power networks demonstrate that the proposed approach improves both transient and steady-state performances, while unstructured neural networks lead to unstable behaviors.

\end{abstract}


\section{Introduction}\label{sec: intro}
\vspace{-0.3cm}
Learning-based methods have the potential to solve difficult problems in control and have received significant attention from both the machine learning and control communities. A common problem in many applications is to design controllers that--with as little control effort as possible--stabilize a system at a prescribed output. A canonical example is the LQR problem and its variants, which finds optimal linear controllers for linear systems~\cite{fazel2018global,lisafe}.

For nonlinear systems, the problem is considerably harder. Neither of the two control goals, system stabilization and output tracking at the steady state, is easy to achieve. Their combination, optimizing controller costs while guaranteeing stability and output tracking at the steady state, is even more challenging. 
For example, vehicles in a platoon should stay in formation (stability) while cruising at the desired speed (output tracking)~\cite{coogan2014dissipativity}. 
The optimization is then done over the set of all controllers that achieves both of these goals, e.g., reaching a platoon while consuming the least amount of fuel. Another pertinent application is the stability of electric grids with high amounts of renewables. Unlike systems with conventional generators, the power electronic interfaces of the inverters need to be actively controlled such that the grid synchronizes to the same frequency (output tracking), at the minimum operational costs.
Currently, learning-based approaches mostly parameterize nonlinear controllers as neural networks and train them to optimize performance, but provable guarantees on stability and steady-state error for these controllers are nontrivial to obtain~\cite{zhang2021multi, qu2020scalable, wang2021multi,chu2019multi,donti2020enforcing}.

A recent method that was developed to provide stability guarantees is to learn a Lyapunov function and use it to certify stability through a satisfiability modulo theories (SMT) solver~\cite{chang2019neural, zhou2022neural, dawson22a,huang2019adaptive,jin2020neural}. But this approach is difficult to scale to high-dimensional systems and the learned Lyapunov function only works for a single equilibrium. Many real-world applications, however, have multiple (possibly a continuous set of) equilibria that may be of interest. For example, a group of vehicles may need to cruise at different speeds, and it is impractical to learn a Lyapunov function for every possible speed. 
There are some works that showed a class of monotone controllers can provide stability guarantees for varying equilibria~\cite{cui2022tps,shi2021stability}, but they rely on  tailor-made Lyapunov functions that are found in specific applications. In particular, these results are limited to single-input and single-output (SISO) control. In practice, however, lots of complex systems need controllers that are multiple-input and multiple-output (MIMO), sometimes with high input and output dimensions. 

Moreover, current learning-based approaches typically focus on optimizing transient performance (the cost and speed at which a system reaches an equilibrium), often overlooking the steady-state behavior (cost at the equilibrium state). In classical control terminology, these controller is proportional.
Steady-state requirements are difficult to incorporate since training can only occur over a finite horizon.  To control steady-state behavior, an integral term is needed to drive the system outputs to the desired value at the steady state~\cite{Zhao2015acc, andreasson2014distributed,khalil2015nonlinear}. This reasoning underlies the widespread use of linear Proportional-Integral (PI) controllers in practical applications~\cite{thomsen2010pi, wang2013self,aulia2021fuzzy}. However, linear parameterization inherently limits the controllers' degrees of freedom, potentially leading to suboptimal performance. This work addresses the following open question:
\textit{Can we learn nonlinear controllers that guarantee transient stability and zero steady-state error for MIMO systems?}

\textbf{Contributions.} Clearly, it is not possible to design a controller for all nonlinear systems and the answer to the question above depends on the class of systems under consideration. This paper focuses on systems that satisfy the equilibrium independent passivity (EIP)~\cite{arcak2016networks,hines2011equilibrium}, which is present in many critical societal-scale systems including transportation~\cite{burger2014duality}, power systems~\cite{cui2022equilibrium}, and communication network~\cite{simpson2018equilibrium}. This abstraction allows us to design generalized controllers without considering the detailed physical system dynamics. We propose a structured Neural-PI controller that has provable guarantees of stability and zero steady-state output tracking error. 
The key structure we use is strictly monotone functions with vector-valued inputs and outputs. Even though scalar-valued monotone functions have been studied in the learning community~\cite{sill1997monotonic,wehenkel2019unconstrained}, 
their generalization to the vector-valued case has not. 
Note that numerous papers studied vector-valued functions that are monotone in every input, that is, $f(\bm{x}) \leq f(\bm{x}')$ if $\bm{x} \leq \bm{x}'$~\cite{liu2020certified,daniels2010monotone,sivaraman2020counterexample,zhang1999feedforward}. This class of functions is too restrictive to use for controller design, and we construct a more general class of monotone functions (often called monotone operators)~\cite{ryu2016primer} that satisfy an inner product inequality.  
Experiments on traffic and power systems demonstrate that Neural-PI control can reduce the transient cost by at least 30\% compared to the best linear controllers and maintain stability guarantees as well. Unstructured neural networks, on the other hand, lead to unstable behaviors.
We summarize our major contributions as follows. 


\begin{enumerate}[noitemsep,nolistsep,leftmargin=*, label=\arabic*)]
\item We propose a generalized PI structure for neural network-based controllers in MIMO systems, with proportional and integral terms being strictly monotone functions  constructed through the gradient of strictly convex neural networks (SCNN). We construct SCNN with a tunable softplus-$\beta$ activation function and prove their universal approximation capability in Theorem~\ref{thm: Universal}.

    \item For a multi-agent system with an underlying communication graph, we show how to restrict the controllers to respect the communication constraints through the composition of SCNNs. 
    \item  Using EIP and the monotone functions structured by SCNNs, we design a generalized Lyapunov function  that works for a range of equilibria. This provides end-to-end guarantees on asymptotic stability and zero steady-state output tracking error proved in Theorem~\ref{thm: output_level_stable}. Thus the neural networks can be trained for transient optimization without jeopardizing the stability guarantees, establishing a framework for coordinating transient optimization and steady-state requirement.  
\end{enumerate}

\vspace{-0.4cm}
\section{Background and Preliminaries}
\label{sec: background}

\vspace{-0.2cm}
We consider a dynamic system described by:
\begin{equation}\label{eq:node_dyn}
\dot{\bm{x}}=\bm{f}(\bm{x}, \bm{u}), \quad  \bm{y}=\bm{h}(\bm{x}),
\end{equation}
    with state $\bm{x}(t) \in \mathbb{R}^n$, control action $\bm{u}(t) \in \mathbb{R}^m$, and output $\bm{y}(t) \in \mathbb{R}^v$.  We also sometimes omit the time index $t$ for simplicity\footnote{Throughout this paper, vectors are denoted in lower case bold and matrices are denoted in upper case bold, unless otherwise specified. Vectors of all ones and zeros are denoted as  $\mathbbold{1}_n, \mathbbold{0}_n \in \real^n$, respectively.}. In many practical applications, not all of the internal states are directly accessible. Hence, we consider control actions that follow output-feedback control laws of the form $\bm{u}=\bm{g}(\bm{y})$, which is a function of the output $\bm{y}$ and not the state $\bm{x}$. 


A state $\bm{x}^*$ such that $\bm{f}(\bm{x}^*, \!\bm{u}^*)\!=\!\mathbbold{0}_n, \bm{y}^*\!=\!\bm{h}(\bm{x}^*)$, $\bm{u}^*\!=\!\bm{g}(\bm{y}^*)$ is called an equilibrium since the system stops changing at this state. Throughout this paper, the superscript $*$ indicates the equilibrium values of variables.  

\begin{definition}[Local asymptotic stability~\cite{khalil2015nonlinear}, Definition 4.1]\label{def: stability}
The system~\eqref{eq:node_dyn} is asymptotically stable around an equilibrium $\bm{x}^*$ if, $\forall \epsilon>0$, $\exists \delta>0$ such that $\|\bm{x}(0) - \bm{x}^*\|<\delta$ ensures $\|\bm{x}(t) - \bm{x}^*\|<\epsilon$, $\forall t\geq0$, and $\exists \delta^\prime>0$ such that $\|\bm{x}(0) - \bm{x}^*\|<\delta^\prime$ ensures $\lim _{t \rightarrow \infty} \|\bm{x}(t) - \bm{x}^*\|=0$. 
\end{definition}

One of the main tools to prove  asymptotic stability is the Lyapunov's direct method~\cite{zhou2022neural,khalil2015nonlinear}:
\begin{lemma}
[Lyapunov functions and asymptotic stability~\cite{khalil2015nonlinear}, Theorem 4.1]\label{prop: Lyapunov_stability}
Consider the system~\eqref{eq:node_dyn} with an equilibrium $\bm{x}^{*}\in \mathcal{X}\subset\real^n$. Suppose there exists a continuously differentiable function $V: \mathcal{X} \mapsto \mathbb{R}$ that satisfies the following conditions: 1) $V(\bm{x}^{*} )=0, V(\bm{x})>0,\forall \bm{x}\in \mathcal{X} \backslash \bm{x}^{*}  $; 2) $\dot{V}(\bm{x})=\left(\nabla_{\bm{x}} V(\bm{x})\right)^\top \dot{\bm{x}}<0, \forall \bm{x}\in \mathcal{X} \backslash \bm{x}^{*}$.
Then the system is asymptotically stable around $\bm{x}^{*}$. 
\end{lemma}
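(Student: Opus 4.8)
This is the classical Lyapunov direct method, so the plan is to follow the standard sublevel-set argument from Khalil, adapted to the local region $\mathcal{X}$. Without loss of generality shift coordinates so that $\bm{x}^* = \mathbbold{0}_n$. I would first establish stability in the sense of Definition~\ref{def: stability}. Fix $\epsilon > 0$ and pick $r \in (0,\epsilon]$ small enough that the closed ball $\bar B_r = \{\bm{x}: \|\bm{x}\| \le r\}$ lies inside $\mathcal{X}$. Because $V$ is continuous and strictly positive on the compact sphere $\{\|\bm{x}\| = r\}$, the number $\alpha \coloneqq \min_{\|\bm{x}\|=r} V(\bm{x})$ is strictly positive. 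Choose any $\beta \in (0,\alpha)$ and define the sublevel set $\Omega_\beta \coloneqq \{\bm{x} \in \bar B_r : V(\bm{x}) \le \beta\}$; by the choice of $\beta < \alpha$ this set does not touch the sphere $\|\bm{x}\| = r$, so it is a compact subset of the interior of $\bar B_r$.

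Next I would show $\Omega_\beta$ is positively invariant: along any solution starting in $\Omega_\beta$, condition~2) gives $\dot V(\bm{x}(t)) \le 0$ (with strict inequality away from the origin), hence $V(\bm{x}(t)) \le V(\bm{x}(0)) \le \beta$ for as long as the solution is defined; since the solution cannot cross the sphere where $V \equiv \alpha > \beta$, it remains in $\Omega_\beta$, which also yields forward completeness by boundedness. Using continuity of $V$ and $V(\mathbbold{0}_n) = 0$, there is $\delta > 0$ with $\|\bm{x}\| < \delta \Rightarrow V(\bm{x}) < \beta$, so $B_\delta \subseteq \Omega_\beta \subseteq \bar B_r \subseteq \bar B_\epsilon$; therefore $\|\bm{x}(0)\| < \delta$ forces $\bm{x}(t) \in \Omega_\beta$ and hence $\|\bm{x}(t)\| \le r \le \epsilon$ for all $t \ge 0$. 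This gives stability.

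For attractivity, take $\delta' = \delta$. For $\|\bm{x}(0)\| < \delta'$ the trajectory stays in the compact set $\Omega_\beta$, and $t \mapsto V(\bm{x}(t))$ is nonincreasing and bounded below by $0$, so it converges to some $c \ge 0$. The key step is to rule out $c > 0$: if $c > 0$, then by continuity of $V$ there is $d > 0$ with $V(\bm{x}) < c$ whenever $\|\bm{x}\| < d$, so the trajectory is confined to the compact annulus $\{d \le \|\bm{x}\| \le r\}$; on this set $\dot V$ is continuous and strictly negative, hence bounded above by some $-\gamma < 0$, which forces $V(\bm{x}(t)) \le V(\bm{x}(0)) - \gamma t \to -\infty$, contradicting $V \ge 0$. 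Thus $c = 0$, i.e. $V(\bm{x}(t)) \to 0$. Finally I would convert this to $\bm{x}(t) \to \mathbbold{0}_n$: if not, some subsequence $\bm{x}(t_k)$ stays a fixed distance $\eta > 0$ from the origin, and by compactness of $\Omega_\beta$ it subconverges to a point $\bar{\bm{x}} \neq \mathbbold{0}_n$ with $V(\bar{\bm{x}}) > 0$, contradicting $V(\bm{x}(t_k)) \to 0$.

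The routine parts are the topological bookkeeping; the step I expect to be the main obstacle — and the one worth stating carefully — is the construction of the compact, positively invariant sublevel set $\Omega_\beta$ strictly inside $\mathcal{X}$ and the "$\dot V \le -\gamma$ on an annulus" estimate that upgrades $V(\bm{x}(t)) \to c$ to $c = 0$; everything else is continuity and compactness.
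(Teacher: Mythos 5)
Your proof is correct: it is the classical sublevel-set argument for Lyapunov's direct method, which is precisely the proof of Khalil's Theorem 4.1 that the paper cites in lieu of proving this lemma itself. The construction of the compact positively invariant set $\Omega_\beta$, the nested inclusions $B_\delta \subseteq \Omega_\beta \subseteq \bar{B}_r$, and the annulus estimate ruling out $c>0$ are all exactly the standard steps, so there is nothing to reconcile with the paper beyond the implicit standing assumptions (continuity of the closed-loop vector field and existence of solutions) that both you and the cited source rely on.
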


The key challenge to using Lyapunov theory is in constructing such a function and verifying the satisfaction of the conditions in Lemma~\ref{prop: Lyapunov_stability}. An important part of this paper is to show how to systematically use neural networks to construct Lyapunov functions for a class of nonlinear systems.  

Besides stability, we are often interested in achieving a specific equilibrium such that the output converges to a prescribed setpoint. 
\begin{definition}[Output tracking to $\Bar{\bm{y}}$]\label{def: agreement} The dynamical system~\eqref{eq:node_dyn} is said to track a setpoint $\Bar{\bm{y}}$ if $\lim_{t\to\infty} \bm{y}(t)= \Bar{\bm{y}}$. 
\end{definition}

For output tracking, the dimension of the input $\bm{u}$ should be at least the dimension of the output $\bm{y}$, otherwise there is no enough degrees of freedom in the input to track all the outputs. If the dimensions of the input is strictly larger than the output, it is always possible to define ``dummy'' outputs (e.g., by appending zeros) to match the input dimension. Therefore, it is common to assume  the same input and output dimensions~\cite{hines2011equilibrium}. In the remainder of this paper, we make the following assumption.
\begin{assumption}[Identical input and output dimension]
     The input $\bm{u}$ and the output $\bm{y}$ have the same dimension, namely, $\bm{u}(t)$ and $ \bm{y}(t)$ are both vectors in $\mathbb{R}^m$. 
\end{assumption}



We will show that strictly monotone functions play an important role in guaranteeing stability and output tracking in a range of systems. Using neural networks for constructing monotone functions from $\mathbb{R}$ to $\mathbb{R}$ has been well-studied in~\cite{sill1997monotonic,wehenkel2019unconstrained}. However, since we consider MIMO systems in this paper, we use a generalized notion of monotonicity for vector-valued functions as defined in~\cite{ryu2016primer}.
 \begin{definition}[Strictly monotone function~\cite{ryu2016primer}]
A continuous function $\bm{q}: \mathbb{R}^m \mapsto \mathbb{R}^m$ is strictly monotone on $\mathcal{D}\subset  \mathbb{R}^m$ if $(\bm{q}(\bm{\eta})-\bm{q}(\bm{\xi}))^\top(\bm{\eta}-\bm{\xi}) \geq 0$, $\forall\bm{\eta}, \bm{\xi} \in \mathcal{D}$, with the equality holds if and only if $\bm{\eta}=\bm{\xi}$.
\end{definition}
Monotone functions (often called monotone operators) have been used in a wide range of fields, but existing constructions for scalar-valued functions in~\cite{sill1997monotonic,wehenkel2019unconstrained} do not readily extend to the vector-valued case. In this paper, we will show a way to construct monotone functions using the gradient of SCNNs.

\begin{figure}[!t]	
	\centering
	\includegraphics[width=\linewidth]{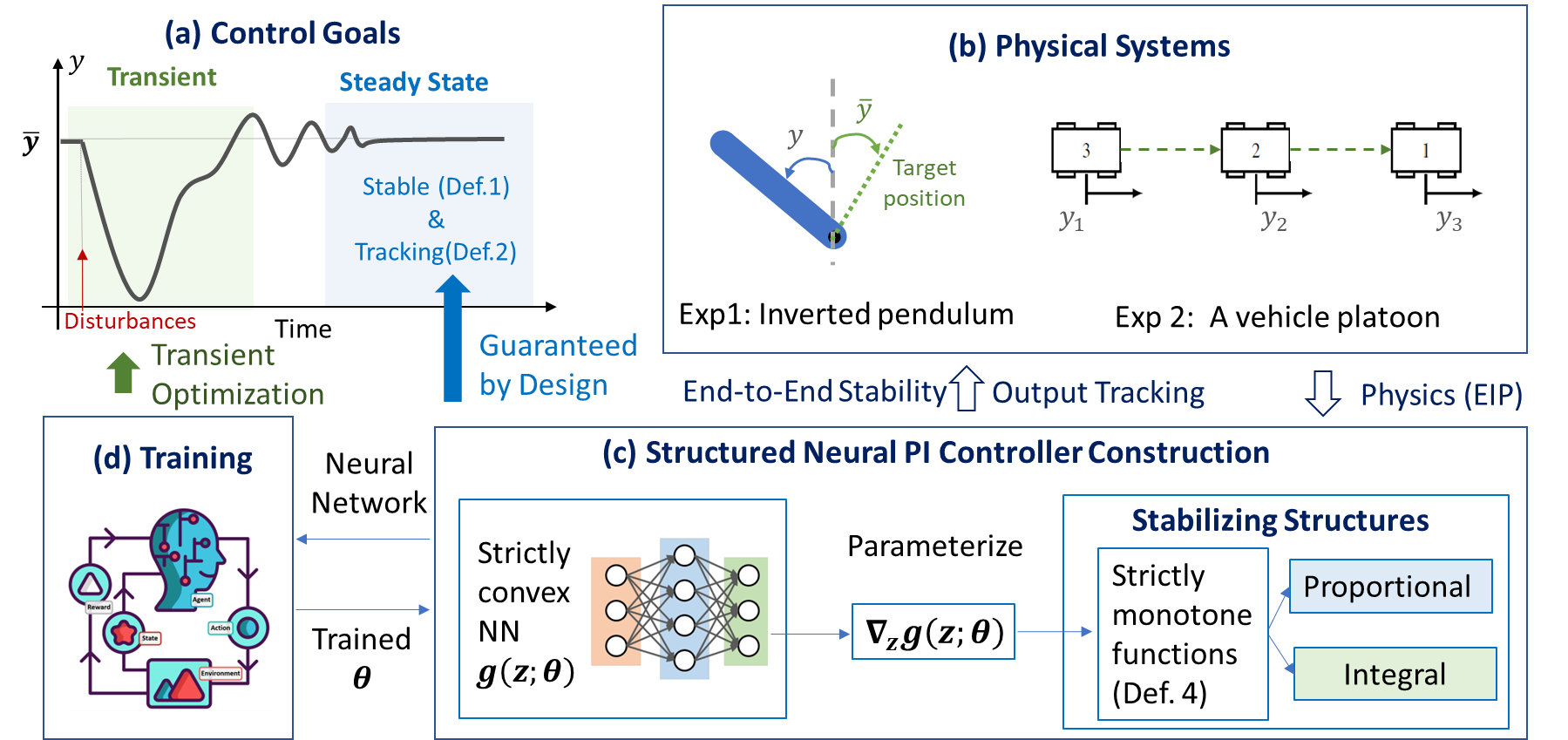}
	\vspace{-0.6cm}
	\caption{\small (a) The controller aims to improve transient performances after disturbances while guaranteeing stabilization to the desired steady-state output $\bar{y}$. (b) Two examples of physical systems with output tracking tasks.
 (c) We provide end-to-end stability and output tracking guarantees by enforcing stabilizing PI structure in the design of neural networks. The key structure is strictly monotone functions, which are parameterized by the gradient of SCNNs. (d) The transient performance is optimized by training the neural networks. 
\vspace{-0.6cm}}
\label{fig:network}
\end{figure}
\vspace{-0.2cm}
\section{Problem Statement}
\vspace{-0.1cm}
\subsection{Transient and steady-state requirements}\label{subsec: prob_stat}
We aim to optimize the controller in $\bm{u}$ to improve the transient response after disturbances while guaranteeing steady-state performance, as shown in Figure~\ref{fig:network}(a). By steady-state performance, we mean the system should settle at the desired setpoint $\Bar{y}$.
By transient performance, we want the system to quickly  reaching the steady state with small control efforts. 



The inverted pendulum in Figure~\ref{fig:network}(b) serves as a good motivating example. The input $u$ is the force on the pendulum. The objectives include 1) the angle $y$ should reach a setpoint $\bar{y}$ (e.g., $5^{\circ}$) and stays there; 2) minimizing the control cost and deviation of $y$ from $\bar{y}$ before reaching the steady state. These objectives are common in many real-world applications, such as vehicle platooning (Fig.~\ref{fig:network}(b)) and power system control in our experiments.  


\vspace{-0.1cm}
\textbf{Transient performance optimization.} During the transient period, our goal is to quickly stabilize the system to the desired setpoint $\bm{\Bar{y}}$, while trading off with using large control efforts in $\bm{u}$. 
Let $J(\cdot)$ be the cost function of the system.~\footnote{Including other forms of differentiable cost functions do not change the analysis.} The transient optimization problem up to time $T$ is,
\vspace{-0.7cm}
\begin{subequations}\label{eq: transient_optimization}
\begin{align}
    \min_{\bm{u}}& 
    \sum_{t=0}^T J(\bm{y}(t)-\Bar{\bm{y}}, \bm{u}(t)), \label{eq:transient_cost}\\
    \text{s.t. } & 
    \text{dynamics in } \eqref{eq:node_dyn}: \dot{\bm{x}}=\bm{f}(\bm{x}, \bm{u}), \quad  \bm{y}=\bm{h}(\bm{x})\,,\\
     & 
    \text{stability and output tracking in Definition~\ref{def: stability}-\ref{def: agreement}}: \lim_{t\to\infty} \bm{y}(t)= \Bar{\bm{y}} \label{eq:tracking}
    \end{align}
\end{subequations}
Note that in \eqref{eq:transient_cost} we sum up to time $T$, which should be long enough to cover the transient period. Since we require tracking in \eqref{eq:tracking}, the exact value of $T$ is not critical in the optimization problem. 
In practice,  the system dynamics \eqref{eq:node_dyn} (and sometimes even the cost function) can be highly nonlinear and nonconvex. Therefore, the current state-of-the-art is to learn $\bm{u}$ by parameterizing it as a neural network and training it by minimizing the cost in~\eqref{eq:transient_cost}~\cite{nagabandi2018neural,bucsoniu2018reinforcement}.
But the key challenge with applying these neural network-based controllers is guaranteeing stability and output tracking~\cite{donti2020enforcing, chang2019neural,jin2020neural}. 
Even if the learned policy may appear ``stable'' during training, it is not necessarily stable during testing. This can be observed in both the vehicle and power system experiments in Section~\ref{sec:experiments}. 

\vspace{-0.1cm}
\textbf{Stability for a range of tracking points.}
We emphasize that the setpoint $\Bar{\bm{y}}$ may vary in practice (e.g., the setpoint velocity of vehicles may change), and thus the controller is required to guarantee stability and output tracking to a range of equilibria corresponding to the setpoints. This is difficult to achieve through existing works that enforce stability by learning a Lyapunov function~\cite{chang2019neural, zhou2022neural,jin2020neural}, since the learned Lyapunov function is for a single equilibrium (normally setup as $\bm{x}^*=0$). These methods also require that all the states $\bm{x}$ are observed, which may not be achieved in practice. 

\vspace{-0.1cm}
\textbf{Communication requirement.} 
A typical constraint in a large system is limits on communications. For example, vehicles in a group may only be able to measure the output of their neighbors (line-of-sight) or nodes in a power system may only have real-time communication from a subset of other nodes. Therefore, the control action $\bm{u}_i$ may be constrained to be a function of a subset of the outputs $\bm{y}$. We show later in Subsection~\ref{subsec: comm} about
how these communication constraints can be naturally accommodated in our controller design.

\vspace{-0.3cm}
\subsection{Bridging controller design and stability through passivity analysis}
As shown in Figure~\ref{fig:network}(a), optimizing transient and steady-state performances  are two  problems in two different time-scales. Therefore, coordinating them is a central challenge. 

\textbf{End-to-end performance guarantees.} 
We propose to overcome this challenge through a structured controller design that provides end-to-end stability and tracking guarantees, as shown in Figure~\ref{fig:network}(c-d). "End-to-end" means that the guarantees are provided by construction, and do not depend on how the controller is trained. Thus, the neural networks can be trained to optimize the transient performance without jeopardizing the stability and steady-state guarantees. In particular, the construction works for a range of equilibria and can conveniently incorporate communication characteristics. 


Instead of working on specific systems individually, we seek to find a family  of physical systems that allows us to derive a generalized controller design. It turns out that the notion of equilibrium independent passivity (EIP)~\cite{arcak2016networks,hines2011equilibrium} provides a concise and useful  abstraction  of physical systems for stability analysis. This paper thus focuses on systems satisfying the following assumptions:


\begin{assumption}[Equilibrium-Independent Passivity~\cite{arcak2016networks} ]\label{asp: EIP}
 The system~\eqref{eq:node_dyn}
 is strictly equilibrium-independent
passive (EIP), which satisfies: (i) there exists a nonempty set $\mathcal{U}^*\subset \real^m$ such that  for every equilibrium $\bm{u}^*\in \mathcal{U}^*$, there is a unique $\bm{x}^*\in\real^n$ such that $\bm{f}(\bm{x}^*, \bm{u}^*)=\mathbbold{0}_n$, and  (ii) there exists
a positive definite storage function 
$S\left(\bm{x}, \bm{x}^*\right)$ and a positive scalar $\rho$ such that
 \begin{equation}\label{eq: def_EIP}
S\left(\bm{x}^*, \bm{x}^*\right)=0 \quad \mbox{ and } \quad \dot{S}\left(\bm{x}, \bm{x}^*\right) \leq-\rho\left\|\bm{y}-\bm{y}^*\right\|^{2}+\left(\bm{y}-\bm{y}^*\right)^\top\left(\bm{u}-\bm{u}^*\right). \end{equation}
\end{assumption}

The EIP property in Assumption~\ref{asp: EIP} has been found in a large class of physical systems, including transportation~\cite{burger2014duality}, power systems~\cite{cui2022equilibrium,nahata2020passivity}, robotics~\cite{meissen2017passivity}, communication~\cite{simpson2018equilibrium}, and others. We will conduct experiments on vehicle platoons and power systems to show how EIP can be validated. 

    
The condition~\eqref{eq: def_EIP} of EIP systems provides us a generalized approach to construct Lyapunov functions without knowing the specifies of the dynamics $\bm{f}(\cdot)$. In turn, we are able to find the right structure for the controllers. 
In Section~\ref{sec:PI}, we show what these structures are for PI controllers and how to enforce such structures in the design of neural networks.  Then Section~\ref{sec: Guarantees} formally demonstrates the theoretical guarantees and the training procedure for transient optimization. 




\vspace{-0.2cm}
\section{Structured Neural-PI Control}
\label{sec:PI}
In this section, we construct controllers with a proportional and an integral term, which are both vector-valued strictly monotone functions parameterized by the gradient of strictly convex functions. Then, we present a neural network architecture that is strictly convex by construction and can conveniently incorporate communication constraints.

\vspace{-0.2cm}
\subsection{ Generalized PI control }
To realize output tracking, we introduce an integral variable $\dot{\bm{s}}  = \bar{\bm{y}} - \bm{y}$. Intuitively,  $\bm{s}$ is the accumulated tracking error and will remain unchanged at the steady-state when $ \bm{y}=\bar{\bm{y}}$. 
On this basis, we consider a generalized PI controller $\bm{u} = \bm{p}(-\bm{y}+\Bar{\bm{y}}) +  \bm{r}(\bm{s})$. The first component is the proportional term, where $\bm{p}(-\bm{y}+\Bar{\bm{y}}) $ is a function of the tracking error between the current output $\bm{y}$ and desired output $\bar{\bm{y}}$. The second component is the integral term $\bm{r}(\bm{s})$ as a function of the integral of historical tracking errors. Intuitively, the proportional term drives $\bm{y}$ close to $\bar{\bm{y}}$, and the integral term ensures the tracking error equals zero at the steady state. 
\begin{design*}\label{design: agreement_set}
Compactly, the control law is
\begin{subequations}\label{eq: PIcontroller1}
\begin{align}
    \bm{u} &= \underbrace{\bm{p}(-\bm{y}+\Bar{\bm{y}})}_{\text{Proportional control}} +  \underbrace{\bm{r}(\bm{s})}_{\text{Integral control}}\label{subeq: PIcontroller1_u}\\
    \dot{\bm{s}} &= -(\bm{y}-\Bar{\bm{y}}).\label{subeq: s_bary}
\end{align}    
\end{subequations}
\end{design*}
\begin{remark}
    The above controller can be envisioned as a nonlinear generalization of widely adopted linear Proportional-Integral (PI) controllers, where $\bm{p}(\bar{\bm{y}} - \bm{y}):=\bm{K}_P(\bar{\bm{y}} - \bm{y}) $,  $\bm{r}(\bm{s}):=\bm{K}_I(\bm{s}) $ with $\bm{K}_P$ and $\bm{K}_I$ being the proportional and integral coefficients. Linear PI control laws are almost always used in existing works~\cite{Zhao2015acc, andreasson2014distributed,khalil2015nonlinear}, but  the performance of linear PI controllers can be poor for large-scale nonlinear systems\cite{wang2021multi,  he2020reinforcement}.
\end{remark}

To achieve provable stability guarantees with output tracking, we need to construct a Lyapunov function that works for the closed-loop system formed by~\eqref{eq:node_dyn} and~\eqref{eq: PIcontroller1}. Therefore, we  further design structures in the proportional function $\bm{p}(\cdot)$ and the integral function $\bm{r}(\cdot)$ that can be utilized in Lyapunov stability analysis. The structures are strictly monotone functions, which generalizes conventional linear functions. In the next subsection, we will show how to parameterize strictly monotone functions. In section~\ref{subsec: thm_Guarantees}, we will show how the PI controllers  structured with monotone functions provide end-to-end stability and output tracking guarantees.

\subsection{Strictly monotone function through gradients of strictly convex neural networks}\label{sec:increasing_NN}
It is not trivial to parameterize strictly monotone functions since this is an infinite-dimensional function space. Scalar-valued monotone functions mapping from $\mathbb{R}$ to $\mathbb{R}$~\cite{sill1997monotonic,wehenkel2019unconstrained} has been proposed, but it is difficult to extend these designs to MIMO systems.
 
In this paper, we propose a new parameterization of strictly monotone functions by leveraging the fact in Proposition~\ref{prop:gradient_convex} that the gradient of a strictly convex function is strictly monotone.
\begin{prop}[Gradients of strictly convex functions]~\label{prop:gradient_convex}
    Let $g:\real^m \mapsto \real$ be a strictly convex function, then
     $(\nabla_{\eta}g(\bm{\eta})-\nabla_{\xi}g(\bm{\xi}))^\top(\bm{\eta}-\bm{\xi}) \geq 0$ $\forall\bm{\eta}, \bm{\xi} \in \real^m$, with equality holds if and only if $\bm{\eta}=\bm{\xi}$. Namely, $\nabla g:\real^m \mapsto \real^m $ is a strictly monotone function.
\end{prop}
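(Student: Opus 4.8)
The plan is to prove the inequality $(\nabla g(\bm{\eta})-\nabla g(\bm{\xi}))^\top(\bm{\eta}-\bm{\xi}) \geq 0$ with equality iff $\bm{\eta}=\bm{\xi}$, directly from the first-order characterization of strict convexity. Recall that a differentiable function $g$ is strictly convex on $\real^m$ if and only if, for all distinct $\bm{\eta},\bm{\xi}\in\real^m$,
\begin{equation}\label{eq:foc-strict}
g(\bm{\eta}) > g(\bm{\xi}) + \nabla g(\bm{\xi})^\top(\bm{\eta}-\bm{\xi}).
\end{equation}
(When $\bm{\eta}=\bm{\xi}$ this holds with equality.) First I would invoke \eqref{eq:foc-strict} once as stated, then a second time with the roles of $\bm{\eta}$ and $\bm{\xi}$ swapped, giving $g(\bm{\xi}) > g(\bm{\eta}) + \nabla g(\bm{\eta})^\top(\bm{\xi}-\bm{\eta})$.

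Next I would add the two strict inequalities. The terms $g(\bm{\eta})$ and $g(\bm{\xi})$ cancel from both sides, leaving $0 > \nabla g(\bm{\xi})^\top(\bm{\eta}-\bm{\xi}) + \nabla g(\bm{\eta})^\top(\bm{\xi}-\bm{\eta})$, i.e. $0 > -(\nabla g(\bm{\eta})-\nabla g(\bm{\xi}))^\top(\bm{\eta}-\bm{\xi})$, which rearranges to $(\nabla g(\bm{\eta})-\nabla g(\bm{\xi}))^\top(\bm{\eta}-\bm{\xi}) > 0$ for all distinct $\bm{\eta},\bm{\xi}$. For $\bm{\eta}=\bm{\xi}$ the expression is trivially $0$. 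This establishes both the inequality (which is in fact strict off the diagonal) and the equality condition simultaneously, hence $\nabla g$ is strictly monotone in the sense of the definition stated earlier.

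Since the paper only asserts differentiability of $g$ implicitly (it writes $\nabla g$), I would note at the outset that $g$ is assumed differentiable on $\real^m$ so that the gradient characterization applies; if one wanted to avoid even that, one could instead restrict \eqref{eq:foc-strict} to the segment between $\bm{\eta}$ and $\bm{\xi}$ and use subgradients, but differentiability is the natural reading here and keeps the argument to two lines. I do not anticipate a genuine obstacle: the only thing to be slightly careful about is citing the \emph{strict} version of the first-order condition (with strict inequality for distinct points) rather than the plain convexity version, since the strictness is exactly what yields the ``equality iff $\bm{\eta}=\bm{\xi}$'' clause. One could alternatively give a self-contained derivation of \eqref{eq:foc-strict} from the definition of strict convexity via the function $\phi(t)=g(\bm{\xi}+t(\bm{\eta}-\bm{\xi}))$ and its convexity/monotonicity of difference quotients, but I would simply cite it as a standard fact.
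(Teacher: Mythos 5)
Your proof is correct: it is the standard two-line argument (apply the strict first-order condition $g(\bm{\eta}) > g(\bm{\xi}) + \nabla g(\bm{\xi})^\top(\bm{\eta}-\bm{\xi})$ for distinct points twice with the roles swapped and add), and you are right that the only care needed is to invoke the \emph{strict} version of that condition, which is what delivers the ``equality iff $\bm{\eta}=\bm{\xi}$'' clause. The paper itself states this proposition without proof, treating it as a standard fact from convex analysis, so your argument supplies exactly the reasoning the paper implicitly relies on; your remark that differentiability of $g$ must be assumed (as the statement already presupposes by writing $\nabla g$) is also apt.
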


\begin{figure}[!t]	
	\centering
	\includegraphics[width=0.8\linewidth]{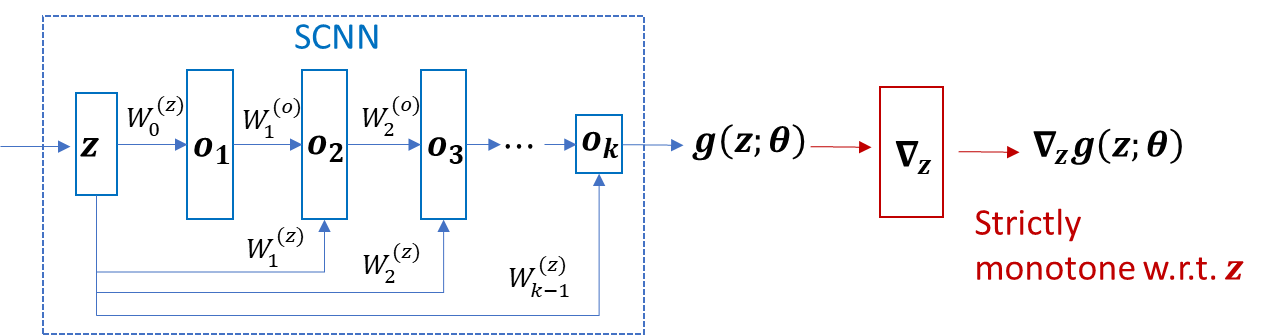}
	\vspace{-0.2cm}	\caption{\small Strictly monotone function constructed by strictly convex neural networks (SCNN) 
	\vspace{-0.6cm}}
	\label{fig:monotone}
\end{figure}
Hence, the strictly monotone property of $\bm{p}(\cdot)$ and $\bm{r}(\cdot)$ can be guaranteed by design if they are the gradient of a strictly convex function, as shown by Figure~\ref{fig:monotone}. The next proposition shows how to  construct a strictly convex neural network (SCNN). 
\begin{prop}[Strictly convex neural networks]\label{prop: SCNN}
Consider a function $g(\bm{z};\bm{\theta}):\real^m \mapsto \real$ parameterized by $k$-layer, fully connected neural network with the input $\bm{z}\in\real^m$. The output $\bm{o}_{l}$ of layer $l=0, \ldots, k-1$ and the function $g(\bm{z}; \bm{\theta})$ is represented as
\begin{equation}\label{eq:ICNN}
\bm{o}_{l+1}=\sigma_l\left(W_l^{(o)} \bm{o}_l+W_l^{(z)} \bm{z}+b_l\right), \quad g(\bm{z}; \bm{\theta})=o_k    
\end{equation}
where $\bm{o}_0, W_0^{(o)} \equiv 0$ , $\bm{\theta}=\left\{W_{0: k-1}^{(z)}, W_{1: k-1}^{(o)}, b_{0: k-1}\right\}$ are trainable weights and biases, and $\sigma_i$  are non-linear activation functions. 
The function $g(\bm{z}; \bm{\theta})$ is strictly convex in $\bm{z}$ provided that all $W_{1: k-1}^{(o)}$ are positive, and all functions $\sigma_l$ are strictly convex and increasing.
\end{prop}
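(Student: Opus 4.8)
The plan is to prove strict convexity of $g(\bm{z};\bm{\theta})$ by induction on the layers, showing that each output coordinate $o_l^{(j)}$, viewed as a function of $\bm{z}$, is convex (and tracking which coordinates are strictly convex). The base case is layer $0$: since $W_0^{(o)}\equiv 0$, we have $\bm{o}_1=\sigma_0(W_0^{(z)}\bm{z}+b_0)$, an affine map composed with a strictly convex increasing function, so each coordinate of $\bm{o}_1$ is convex in $\bm{z}$; moreover if the $j$-th row of $W_0^{(z)}$ is nonzero then $o_1^{(j)}$ is strictly convex along that row's direction. For the inductive step, assume each coordinate of $\bm{o}_l$ is convex in $\bm{z}$. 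Then $W_l^{(o)}\bm{o}_l$ is a nonnegative linear combination of convex functions (here we use that $W_l^{(o)}$ has nonnegative entries, which is implied by ``positive''), hence convex; adding the affine term $W_l^{(z)}\bm{z}+b_l$ preserves convexity; and composing coordinatewise with the strictly convex \emph{increasing} $\sigma_l$ preserves convexity (the increasing hypothesis is essential — composition $\sigma\circ\phi$ is convex when $\sigma$ is convex and increasing and $\phi$ is convex). This gives convexity of $g=o_k$; the remaining task is to upgrade to \emph{strict} convexity.

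For the strictness upgrade, I would track the directions along which the intermediate quantities are strictly convex. Because each $\sigma_l$ is strictly convex (not merely convex), $\sigma_l(a(\bm z))$ is strictly convex in $\bm z$ whenever the affine-plus-convex argument $a(\bm z)=\big(W_l^{(o)}\bm{o}_l+W_l^{(z)}\bm{z}+b_l\big)^{(j)}$ is \emph{non-constant} along the direction in question: strict convexity of $\sigma_l$ means $\sigma_l(\lambda t_1+(1-\lambda)t_2)<\lambda\sigma_l(t_1)+(1-\lambda)\sigma_l(t_2)$ strictly whenever $t_1\neq t_2$, and for any segment in $\bm z$-space along which $a(\bm z)$ takes two distinct values this yields the strict inequality for $\sigma_l\circ a$. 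So it suffices to argue that for every direction $\bm d\neq \bm 0$, the pre-activation of the last layer is non-constant along $\bm d$; this follows because $W_{k-1}^{(z)}\bm z$ contributes the affine term and — more robustly — one imposes (or it is part of the intended reading of the hypotheses) that $W_l^{(z)}$ has full column rank, or simply that the softplus-$\beta$ activations used in the paper are strictly convex everywhere so that any nonzero first-layer contribution already forces strictness that then propagates. A clean way to close it: show by induction that $g(\bm z)-\tfrac{\epsilon}{2}\|\bm z\|^2$ is still convex for some $\epsilon>0$ — this is equivalent to strong (hence strict) convexity on bounded sets — using that softplus-$\beta$ minus a small quadratic is still convex on compacta and the same nonnegative-combination/composition closure applies.

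The main obstacle is the strictness bookkeeping: convexity is closed under the operations involved, but strict convexity is \emph{not} closed under nonnegative linear combinations in general (a sum $\sum_j c_j h_j$ with $c_j\ge 0$ can fail to be strictly convex if along some direction every $h_j$ with $c_j>0$ happens to be affine). The resolution must exploit the structure of the final activation $\sigma_{k-1}$ together with the term $W_{k-1}^{(z)}\bm z$ feeding directly into the last layer: one argues that the argument of $\sigma_{k-1}$ cannot be affine in $\bm z$ along any direction — or, if one only wants strict convexity rather than strong convexity, that along any segment where the argument is affine the strict convexity of $\sigma_{k-1}$ still bites because $\sigma_{k-1}$ of an affine-but-non-constant function is strictly convex, and the argument is non-constant whenever $W_{k-1}^{(z)}$ has full column rank. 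I would therefore state the proof assuming the paper's intended hypotheses make $W_{1:k-1}^{(o)}$ entrywise positive and the $\sigma_l$ strictly convex increasing, reduce to the last-layer argument, and flag the full-column-rank (or all-positive-weights) condition as exactly what is needed to push convexity to strict convexity; everything else is the routine induction sketched above.
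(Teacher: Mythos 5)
Your induction is essentially the paper's argument: the paper disposes of this proposition in one sentence, citing exactly the two closure facts you use (positive combinations of convex functions remain convex, and composing with a convex increasing activation preserves convexity, with strictness supplied by the activations). So on the routine part you and the paper agree. Where you go further is the strictness bookkeeping, and you are right to worry about it: strict convexity is not preserved by nonnegative combinations or by composing a strictly convex $\sigma_l$ with a map that is constant along some direction. Concretely, if there is a nonzero $\bm d$ with $W_l^{(z)}\bm d=\mathbbold{0}$ for every $l$, then every pre-activation, and hence $g$ itself, is constant along $\bm d$, and $g$ cannot be strictly convex in $\bm z$. The paper's statement does not impose any rank or positivity condition on the $W_{0:k-1}^{(z)}$, so the proposition as written is incomplete, and your proposed fix (full column rank of some $W_l^{(z)}$, or equivalently that the pre-activation of some layer is non-constant along every direction, after which strict convexity of $\sigma_l$ on a non-constant convex argument propagates forward because later activations are increasing) is exactly the missing hypothesis. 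Your alternative route via strong convexity on compacta (subtracting $\tfrac{\epsilon}{2}\|\bm z\|^2$) needs the same nondegeneracy condition for the same reason, so it does not buy you anything beyond the direct argument; I would drop it and simply state the rank condition. In short: same approach as the paper, executed more carefully, and you have correctly identified a hypothesis the paper implicitly assumes but does not state.
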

The construction of SCNN follows the general structure  of input-convex neural networks~\cite{amos2017input, chen2018optimal}, where the activation function and conditions on weights are modified to achieve \emph{strictly} convexity. The proof follows from the fact that positive sums of strictly convex functions are also strictly convex and that
the composition of a strictly convex and convex increasing
function is also strictly 
 convex.

The next theorem demonstrates the universal approximation property of SCNN in Proposition~\ref{prop: SCNN}.
\begin{theorem}[Universal approximation ]\label{thm: Universal}
Let $\mathcal{Z}$ be a compact domain in $\real^m$ and $q(\bm{z}):\mathcal{Z}\mapsto\real$ be a Lipschitz continuous and strictly convex function.
For any $\epsilon>0$, there exists a function $g(\bm{z}; \bm{\theta}):\mathcal{Z}\mapsto\real$ constructed by~\eqref{eq:ICNN} where the activation function is the  softplus-$\beta$ function
\begin{equation}\label{eq: softplus}
\sigma_l^{\text{Softplus}\beta}(x):=\frac{1}{\beta}\log(1+e^{\beta x}), 
\end{equation}
such that $\left|q(\bm{z})-g(\bm{z}; \bm{\theta})\right|<\epsilon$ for all $\bm{z} \in \mathcal{Z}$. 
\end{theorem}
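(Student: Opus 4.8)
The plan is to reduce the problem to a known universal approximation result for (not necessarily convex) feedforward networks, and then fix up convexity and the activation at the cost of an arbitrarily small error. First I would recall the standard universal approximation theorem for input-convex neural networks (ICNNs): by~\cite{amos2017input,chen2018optimal}, for any $\epsilon/2>0$ and any convex Lipschitz $q$ on a compact domain $\mathcal{Z}$, there is an ICNN $\tilde g(\bm z)$ with ReLU (or more generally convex increasing piecewise-linear) activations such that $|q(\bm z)-\tilde g(\bm z)|<\epsilon/2$ on $\mathcal{Z}$. The two gaps between this statement and the theorem are (i) we need \emph{strict} convexity, with all $W^{(o)}_{1:k-1}$ strictly positive and all $\sigma_l$ strictly convex increasing, and (ii) the activation must be the softplus-$\beta$ function~\eqref{eq: softplus} rather than ReLU.

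To close gap (ii), I would use the well-known fact that softplus-$\beta$ converges uniformly to ReLU as $\beta\to\infty$: $0\le \sigma^{\text{Softplus}\beta}(x)-\max(x,0)\le \frac{\log 2}{\beta}$ for all $x\in\real$. Since the domain $\mathcal{Z}$ is compact and the ICNN $\tilde g$ has finitely many layers with fixed weights, all pre-activations $W^{(o)}_l\bm o_l+W^{(z)}_l\bm z+b_l$ stay in a bounded set; hence replacing every ReLU by softplus-$\beta$ perturbs each layer's output by $O(1/\beta)$, and by Lipschitz continuity of the subsequent affine maps and activations these perturbations propagate to an $O(1/\beta)$ change in the network output, uniformly on $\mathcal{Z}$. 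Choosing $\beta$ large enough makes this additional error below $\epsilon/4$. Crucially, softplus-$\beta$ is itself strictly convex and strictly increasing, which already delivers half of what gap (i) requires.

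To close the rest of gap (i) — strict positivity of the $W^{(o)}_{1:k-1}$ and overall strict convexity — I would add a small strictly convex regularizing term: set $g(\bm z;\bm\theta)=\hat g(\bm z)+\gamma\|\bm z\|^2$, where $\hat g$ is the softplus-$\beta$ network from the previous step (whose nonnegative $W^{(o)}$ weights I perturb to be strictly positive, changing the output by at most $O(1/\beta)$ again on the compact domain, absorbed into the $\epsilon/4$ budget), and where the term $\gamma\|\bm z\|^2$ is realized by an extra hidden unit (or can be folded into the architecture~\eqref{eq:ICNN} since quadratics are expressible via softplus-$\beta$ compositions to arbitrary accuracy — alternatively one simply notes that $g+\gamma\|\cdot\|^2$ is strictly convex whenever $g$ is convex). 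Since $\mathcal{Z}$ is compact, $\sup_{\bm z\in\mathcal{Z}}\gamma\|\bm z\|^2\le \gamma\,\mathrm{diam}(\mathcal{Z})^2$, so picking $\gamma$ small enough keeps this contribution below $\epsilon/4$. Summing the three error contributions ($\epsilon/2$ from the ICNN approximation, $\epsilon/4$ from the softplus-$\beta$ and weight-positivity substitution, $\epsilon/4$ from the strict-convexity regularizer) gives $|q(\bm z)-g(\bm z;\bm\theta)|<\epsilon$ on $\mathcal{Z}$, and by construction $g(\cdot;\bm\theta)$ satisfies the hypotheses of Proposition~\ref{prop: SCNN} and is therefore strictly convex.

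The main obstacle I anticipate is the bookkeeping in the error-propagation argument for gap (ii): one must verify that the Lipschitz constants accumulated through the $k$ layers are finite and independent of $\beta$ (they are, because the weights are fixed once the ICNN is chosen and softplus-$\beta$ is $1$-Lipschitz), so that the total perturbation is genuinely $O(1/\beta)$ and can be driven below any threshold. A secondary subtlety is making sure the strict-convexity fixes (strictly positive $W^{(o)}$, added $\gamma\|\bm z\|^2$) do not themselves break the affine-in-$\bm z$ skip-connection structure required by~\eqref{eq:ICNN}; this is handled by routing the quadratic term through the $W^{(z)}_l$ pathway and noting Proposition~\ref{prop: SCNN}'s conditions are preserved.
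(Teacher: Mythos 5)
Your proposal is correct and follows essentially the same route as the paper: invoke the ReLU--ICNN universal approximation result of \cite{amos2017input,chen2018optimal} for an $\epsilon/2$ error, then replace ReLU by softplus-$\beta$ with fixed weights and propagate the uniform $O(1/\beta)$ activation gap layer by layer (the paper does this via an explicit induction using the bound $0<\sigma^{\text{Softplus}\beta}(x+\Delta)-\sigma^{\text{ReLU}}(x)<\Delta+\tfrac{1}{\beta}\log 2$, which is exactly your Lipschitz-propagation argument made quantitative). The only difference is your extra $\epsilon/4$ budget for enforcing strict convexity of the approximant (strictly positive $W^{(o)}$ plus a $\gamma\|\bm z\|^2$ term); the paper's proof omits this step, as the theorem only asserts approximation by the architecture with softplus-$\beta$ activations, though your added care is a reasonable refinement if one additionally wants the approximant itself to satisfy the hypotheses of Proposition~\ref{prop: SCNN}.
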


The proof is given in Appendix~\ref{app: Universal}. We sketch the proof as follows. We leverage the results in~\cite{amos2017input, chen2018optimal} that the structure~\eqref{eq:ICNN} with ReLU activation is a universal approximation for any convex function. However, ReLU activations are not strictly convex, and thus we design the Softplus-$\beta$ activation. By showing that the structure~\eqref{eq:ICNN} with Softplus-$\beta$ activation can approximate neural networks with the ReLU activations arbitrarily closely when $\beta$ is sufficiently large, we then prove that the structure~\eqref{eq:ICNN} with Softplus-$\beta$ can universally approximate any strictly convex function. 

Notably, $\beta$ in the activation function is a tunable parameter. Hence, we write down the SCNN~\eqref{eq:ICNN} with activation function being Softplus-$\beta$ function in~\eqref{eq: softplus} as $g(\bm{z}; \bm{\theta}, \beta) $, where $\beta$ is an extra trainable parameter.

Thus, we parameterize the structured Neural-PI control law $\bm{u}=\bm{p}(-\bm{y}+\Bar{\bm{y}})+\bm{r}(\bm{s})$ in~\eqref{eq: PIcontroller1} as
\begin{equation}\label{eq:PI_SCNN}
\begin{split}
    \bm{p}(-\bm{y}+\Bar{\bm{y}})&=\nabla_{\bm{z}} g^{(P)}(\bm{z}; \bm{\theta}^{(P)}, \beta^{(P)})|_{\bm{z}=-\bm{y}+\Bar{\bm{y}}},\\
    \bm{r}(\bm{s})&=\nabla_{\bm{z}} g^{(I)}(\bm{z}; \bm{\theta}^{(I)}, \beta^{(I)})|_{\bm{z}=\bm{s}}\,.   
\end{split}    
\end{equation}
This way, $\bm{p}(\cdot)$ and $\bm{r}(\cdot)$ by construction are strictly monotone. 
In particular, the convex functions $g^{(P)}(\bm{z}; \bm{\theta}^{(P)}, \beta^{(P)})$, $g^{(I)}(\bm{z}; \bm{\theta}^{(I)}, \beta^{(I)})$ play a vital role in constructing a generalized Lyapunov function and showing the satisfaction of the Lyapunov condition in Subsection~\ref{subsec: thm_Guarantees}.
This subsequently provides end-to-end stability guarantees that do not dependent on the training algorithm.

\vspace{-0.2cm}
\subsection{Embedding Communication Constraints}~\label{subsec: comm}
For some large-scale physical systems, not all of the control inputs have access to all of the output measurements at real-time. Therefore, some $u_i$'s cannot be a function of the full vector $\bm{y}$ and $\bm{s}$ because of this lack of global communication, as elaborated in Subsection~\ref{subsec: prob_stat}.  

\begin{figure}[!t]	
	\centering
    \includegraphics[width=\linewidth]{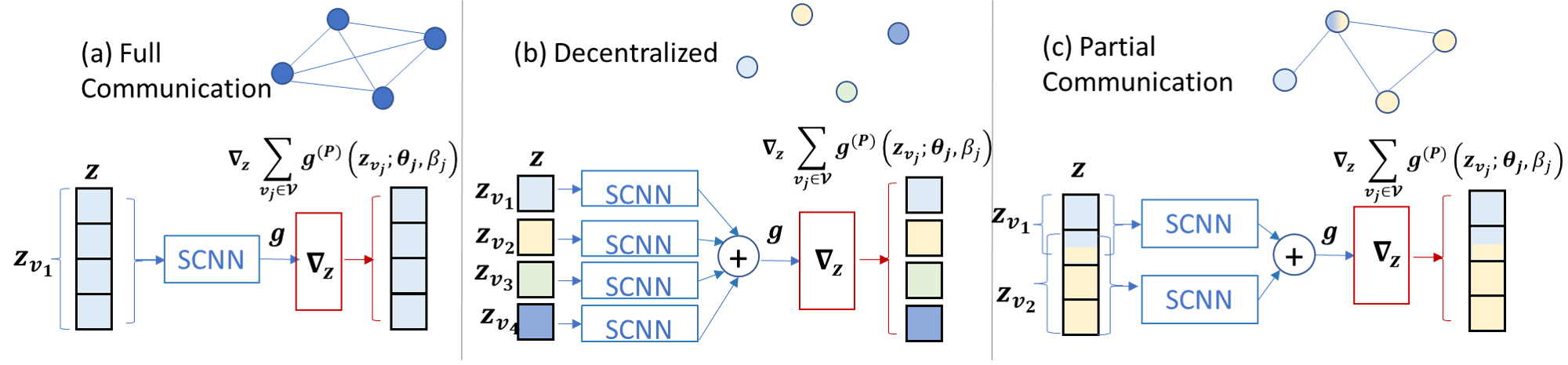}
	\vspace{-0.2cm}	\caption{\small Communication embedded controller. Take $m=4$ where $\bm{z}=(z_1,z_2,z_3,z_4)$ and P network $\bm{p}=\nabla_{\bm{z}}g^{(P)}(\bm{z}; \bm{\theta}^{(P)}, \beta^{(P)}) $ as an example. (a) Global communication, where $\mathcal{V}=\left\{(1,2,3,4)\right\}$ and $\bm{p}$ can be a function of the full $\bm{z}$. (b) Decentralized, $\mathcal{V}=\left\{(1),\cdots,(4)\right\}$ and $p_i$ is a function of $z_i, \forall i$. (c) Partial communication, $\mathcal{V}=\left\{(1, 2),(2, 3, 4)\right\}$, where there exists communication within $(1,2) $ and $(2,3,4)$. 
	\vspace{-0.3cm}}
	\label{fig:comm}
\end{figure}

Constructing the controllers as the gradient of convex functions provides us with a convenient way to embed these communication constraints.
Let $\mathcal{V}$ be the set of indexes with communications. Take $m=4$ where $\bm{z}=(z_1,z_2,z_3,z_4)$ and the proportional term $\bm{p}=\nabla_{\bm{z}}g^{(P)}(\bm{z}; \bm{\theta}^{(P)}, \bm{\beta}^{(P)}) $ as an example. Figure~\ref{fig:comm}(a) shows the case with global communication, where $\mathcal{V}=\left\{(1,2,3,4)\right\}$
and $\bm{p}$ can be a function of the full $\bm{z}$. Figure~\ref{fig:comm}(b) shows the case without communication where 
$\mathcal{V}=\left\{(1),\cdots,(4)\right\}$
and $p_i$ can only be a function of $z_i$ for all $i$. 
Figure~\ref{fig:comm}(c) shows the case $\mathcal{V}=\left\{(1, 2),(2, 3, 4)\right\}$, 
where there exists communication within the indexes $(1,2) $ and within  $(2,3,4)$. By defining SCNN $g(\bm{z}_{\bm{v}_j}; \bm{\theta}^{(P)}_j, \beta^{(P)}_j)$ for each group in $\bm{v}_j\in\mathcal{V}$, the gradient $\nabla_{\bm{z}} g(\bm{z}_{\bm{v}_j}; \bm{\theta}^{(P)}_j, \beta^{(P)}_j)$ will only be a function of $\bm{z}_{\bm{v}_j}$ and thus satisfying the commutation constraints. 
Therefore, we construct the functions in~\eqref{eq:PI_SCNN} as 
\begin{equation}\label{eq:SCNN_Comm} 
\nabla_{\bm{z}} g^{(P)}(\bm{z}; \bm{\theta}^{(P)}, \bm{\beta}^{(P)}) =\nabla_{\bm{z}}\sum_{\bm{v}_j\in\mathcal{V}}g^{(P)}(\bm{z}_{\bm{v}_j}; \bm{\theta}^{(P)}_j, \beta^{(P)}_j) ,
\end{equation}
where $\bm{\theta}^{(P)}:=\{\bm{\theta}^{(P)}_1,\cdots,\bm{\theta}^{(P)}_{|\mathcal{V}|}\} $,  $\bm{\beta}^{(P)}:=\{\beta^{(P)}_1,\cdots,\beta^{(P)}_{|\mathcal{V}|}\}$, and $\bm{\theta}^{(P)}_j, \bm{\theta}^{(I)}_j$ are parameters of the SCNNs  for group $\bm{v}_j$ within the communication network.

The function $g^{(I)}(\bm{z}; \bm{\theta}^{(I)}, \beta^{(I)})$ is also constructed in a similar way as  $g^{(P)}(\bm{z}; \bm{\theta}^{(P)}, \beta^{(P)})$  in~\eqref{eq:SCNN_Comm} to incorporate the communication constraints in $\mathcal{V}$. Strict convexity still holds since a sum of strictly convex functions is also strictly convex. 



\vspace{-0.2cm}
\section{Training with End-to-End Guarantees}\label{sec: Guarantees}

\subsection{End-to-end stability and output-tracking guarantees}\label{subsec: thm_Guarantees}
The convex function $g^{(I)}(\bm{s}; \bm{\theta}^{(I)}, \bm{\beta}^{(I)})$ constructed from SCNNs can be utilized to construct a Lyapunov function for proving stability using the Bregman distance defined in the following Lemma~\cite{Boyd2004convex}:
\begin{lemma}[Bregman distance]\label{lemma: Bregman}
The Bregman distance associated with the convex function $g^{(I)}(\bm{s}; \bm{\theta}^{(I)}, \beta^{(I)})$ is defined by
\begin{equation*}\label{eq:Bregman_s}
B(\bm{s}, \bm{s}^*; \bm{\theta}^{(I)}, \beta^{(I)}) = g^{(I)}(\bm{s}; \bm{\theta}^{(I)}, \beta^{(I)})-g^{(I)}(\bm{s}^*; \bm{\theta}^{(I)}, \beta^{(I)})-\nabla_{\bm{s}}g^{(I)}(\bm{s}^*; \bm{\theta}^{(I)}, \beta^{(I)})^\top\left(\bm{s}-\bm{s^*}\right) ,   
\end{equation*}
which is positive definite with equality holds if and only if $\bm{s}=\bm{s}^*$.
\end{lemma}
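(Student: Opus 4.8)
The claim is that, for a fixed $\bm{s}^*$, the map $\bm{s} \mapsto B(\bm{s}, \bm{s}^*; \bm{\theta}^{(I)}, \beta^{(I)})$ vanishes at $\bm{s} = \bm{s}^*$ and is strictly positive everywhere else. The case $\bm{s} = \bm{s}^*$ is immediate by substitution, so the substance is the strict inequality for $\bm{s} \neq \bm{s}^*$, which is the standard ``first-order lower bound'' property of strictly convex functions; the only care needed is in extracting the \emph{strict} part rigorously.

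First I would record that $g^{(I)}(\cdot; \bm{\theta}^{(I)}, \beta^{(I)})$ is continuously differentiable, since by Proposition~\ref{prop: SCNN} it is built from the smooth softplus-$\beta$ activation~\eqref{eq: softplus} through~\eqref{eq:ICNN}, so $\nabla_{\bm{s}} g^{(I)}$ exists and is continuous. Fix $\bm{s} \neq \bm{s}^*$, set $\bm{d} = \bm{s} - \bm{s}^*$, and define the one-dimensional restriction $\phi(t) = g^{(I)}(\bm{s}^* + t\bm{d}; \bm{\theta}^{(I)}, \beta^{(I)})$ for $t \in [0,1]$. Then $\phi$ is $C^1$ with $\phi'(t) = \nabla_{\bm{s}} g^{(I)}(\bm{s}^* + t\bm{d})^\top \bm{d}$, and the fundamental theorem of calculus gives
\[
B(\bm{s}, \bm{s}^*; \bm{\theta}^{(I)}, \beta^{(I)}) = \phi(1) - \phi(0) - \phi'(0) = \int_0^1 \big(\phi'(t) - \phi'(0)\big)\, dt .
\]

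Next I would show the integrand is strictly positive for every $t \in (0,1]$. Applying Proposition~\ref{prop:gradient_convex} to $g^{(I)}$ with $\bm{\eta} = \bm{s}^* + t\bm{d}$ and $\bm{\xi} = \bm{s}^*$ yields
\[
\big(\nabla_{\bm{s}} g^{(I)}(\bm{s}^* + t\bm{d}) - \nabla_{\bm{s}} g^{(I)}(\bm{s}^*)\big)^\top (t\bm{d}) > 0 ,
\]
where strictness holds because $\bm{s}^* + t\bm{d} \neq \bm{s}^*$ for $t \in (0,1]$ (as $\bm{d} \neq \bm{0}$); dividing by $t > 0$ gives $\phi'(t) - \phi'(0) > 0$. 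Since the integrand is continuous on $[0,1]$ and strictly positive on $(0,1]$, its integral over $[0,1]$ is strictly positive, so $B(\bm{s}, \bm{s}^*; \bm{\theta}^{(I)}, \beta^{(I)}) > 0$. Combined with the trivial evaluation $B(\bm{s}^*, \bm{s}^*; \bm{\theta}^{(I)}, \beta^{(I)}) = 0$, this gives positive definiteness together with the stated equality condition.

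There is no genuine obstacle here; the one point demanding precision is the passage from ``the integrand is positive on $(0,1]$'' to ``the integral is strictly positive,'' which holds because the continuous function $t \mapsto \phi'(t) - \phi'(0)$ attains a positive minimum on every interval $[\delta, 1]$ with $\delta \in (0,1)$. An equivalent route, avoiding Proposition~\ref{prop:gradient_convex}, is to note that strict convexity of $g^{(I)}$ makes $\phi$ strictly convex on $[0,1]$, hence $\phi'$ strictly increasing, and then conclude $\int_0^1 (\phi'(t) - \phi'(0))\, dt > 0$ directly.
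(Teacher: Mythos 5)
Your proof is correct. One structural point worth noting: the paper does not actually supply a proof of Lemma~\ref{lemma: Bregman} anywhere---it is stated as a known property of Bregman distances with a citation to a convex-analysis reference, and the appendix only proves Theorems~\ref{thm: Universal} and~\ref{thm: output_level_stable}. So there is no in-paper argument to compare against; what you have written is the standard proof that fills this gap. Your route---writing $B$ as $\int_0^1(\phi'(t)-\phi'(0))\,dt$ via the $C^1$ restriction $\phi(t)=g^{(I)}(\bm{s}^*+t\bm{d})$ and then invoking the strict monotonicity of $\nabla g^{(I)}$ from Proposition~\ref{prop:gradient_convex} to make the integrand strictly positive on $(0,1]$---is rigorous, and you correctly isolate the only delicate step (passing from pointwise positivity on $(0,1]$ to strict positivity of the integral, which continuity handles). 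It also fits the paper's own machinery well, since Proposition~\ref{prop:gradient_convex} is exactly the tool the paper sets up for such arguments, and the smoothness of the softplus-$\beta$ SCNN justifies the $C^1$ hypothesis. Your alternative remark (strict convexity of $\phi$ implies $\phi'$ strictly increasing, bypassing Proposition~\ref{prop:gradient_convex}) is equally valid and arguably more self-contained. No gaps.
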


The Bregman distance allows us to construct Lyapunov functions without specifying the equilibrium $\bm{s}^*$.
Combining the storage function in Assumption~\ref{asp: EIP} and 
$B(\bm{s}, \bm{s}^*; \bm{\theta}^{(I)}, \beta^{(I)})$ in Lemma~\ref{lemma: Bregman}, next theorem shows that the Neural-PI controller stabilizes the system to the desired output $\Bar{\bm{y}}$.
\begin{theorem}~\label{thm: output_level_stable}
Suppose Assumption~\ref{asp: EIP} holds and the input $\bm{u}$ follows the structured PI control law $\bm{u} = \bm{p}(-\bm{y}+\Bar{\bm{y}}) +  \bm{r}(\bm{s})$ where $\bm{p}(\cdot)$ and $\bm{r}(\cdot)$ are constructed as the gradients of strictly convex neural networks in~\eqref{eq:PI_SCNN}. If the system~\eqref{eq:node_dyn} has a feasible equilibrium, then the system is locally asymptotically stable around it.  In particular, the steady-state outputs track the setpoint $\Bar{\bm{y}}$, namely $\bm{y}^*=\Bar{\bm{y}}$.
\end{theorem}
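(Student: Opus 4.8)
The plan is to exhibit an explicit Lyapunov function for the closed-loop system formed by \eqref{eq:node_dyn} and the control law \eqref{eq: PIcontroller1}, and to verify the two conditions of Lemma~\ref{prop: Lyapunov_stability}. The natural candidate combines the storage function $S(\bm{x},\bm{x}^*)$ furnished by Assumption~\ref{asp: EIP} with the Bregman distance $B(\bm{s},\bm{s}^*;\bm{\theta}^{(I)},\beta^{(I)})$ of Lemma~\ref{lemma: Bregman}, namely
\begin{equation*}
V(\bm{x},\bm{s}) = S(\bm{x},\bm{x}^*) + B(\bm{s},\bm{s}^*;\bm{\theta}^{(I)},\beta^{(I)}).
\end{equation*}
By Assumption~\ref{asp: EIP} the first term is positive definite and vanishes at $\bm{x}^*$, and by Lemma~\ref{lemma: Bregman} the second term is positive definite in $\bm{s}$ and vanishes at $\bm{s}^*$; hence $V$ satisfies condition~1) of Lemma~\ref{prop: Lyapunov_stability} on a neighborhood of the augmented equilibrium $(\bm{x}^*,\bm{s}^*)$.

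First I would identify the equilibrium. At steady state $\dot{\bm{s}}=\mathbbold{0}$ forces $\bm{y}^*=\Bar{\bm{y}}$, which is precisely the output-tracking claim; this already pins down $\bm{y}^*$, and by the uniqueness part (i) of Assumption~\ref{asp: EIP} the corresponding $\bm{x}^*$ and $\bm{u}^*$ are determined once a feasible equilibrium exists, with $\bm{u}^* = \bm{p}(\mathbbold{0}) + \bm{r}(\bm{s}^*)$ defining $\bm{s}^*$ through the strictly monotone (hence injective) map $\bm{r}$. Next I would compute $\dot V$ along trajectories. Differentiating, $\dot S \le -\rho\|\bm{y}-\bm{y}^*\|^2 + (\bm{y}-\bm{y}^*)^\top(\bm{u}-\bm{u}^*)$ by \eqref{eq: def_EIP}, while $\dot B = \big(\nabla_{\bm{s}}g^{(I)}(\bm{s}) - \nabla_{\bm{s}}g^{(I)}(\bm{s}^*)\big)^\top \dot{\bm{s}} = -\big(\bm{r}(\bm{s})-\bm{r}(\bm{s}^*)\big)^\top(\bm{y}-\bm{y}^*)$ using $\dot{\bm{s}} = -(\bm{y}-\Bar{\bm{y}}) = -(\bm{y}-\bm{y}^*)$ and $\bm{r}=\nabla g^{(I)}$. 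Substituting $\bm{u}-\bm{u}^* = \bm{p}(-\bm{y}+\Bar{\bm{y}}) - \bm{p}(\mathbbold{0}) + \bm{r}(\bm{s}) - \bm{r}(\bm{s}^*)$ into $\dot S$, the integral contributions $(\bm{y}-\bm{y}^*)^\top(\bm{r}(\bm{s})-\bm{r}(\bm{s}^*))$ cancel exactly against $\dot B$, leaving
\begin{equation*}
\dot V \le -\rho\|\bm{y}-\bm{y}^*\|^2 + (\bm{y}-\bm{y}^*)^\top\big(\bm{p}(-\bm{y}+\Bar{\bm{y}}) - \bm{p}(\mathbbold{0})\big).
\end{equation*}
Writing $\bm{\eta} = -\bm{y}+\Bar{\bm{y}}$ and $\bm{\xi}=\mathbbold{0}$, the second term is $-(\bm{\eta}-\bm{\xi})^\top(\bm{p}(\bm{\eta})-\bm{p}(\bm{\xi})) \le 0$ by strict monotonicity of $\bm{p}$ (Proposition~\ref{prop:gradient_convex}), and it is zero only when $\bm{\eta}=\bm{\xi}$, i.e. $\bm{y}=\Bar{\bm{y}}=\bm{y}^*$. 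Thus $\dot V \le -\rho\|\bm{y}-\bm{y}^*\|^2 \le 0$, with $\dot V < 0$ whenever $\bm{y}\neq\bm{y}^*$.

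The remaining obstacle — and I expect this to be the main technical point — is that $\dot V$ is only negative \emph{semi}definite in the augmented state $(\bm{x},\bm{s})$: it vanishes on the whole set where $\bm{y}=\bm{y}^*$, regardless of $\bm{s}$. To upgrade to asymptotic stability I would invoke LaSalle's invariance principle: on the largest invariant set contained in $\{\dot V = 0\}$ we have $\bm{y}\equiv\bm{y}^*$, hence $\dot{\bm{s}}\equiv\mathbbold{0}$ so $\bm{s}$ is constant, and the closed-loop dynamics then force $\bm{u}$ constant with $\dot{\bm{x}} = \bm{f}(\bm{x},\bm{u})$ holding $\bm{y}=\bm{h}(\bm{x})$ fixed; combined with the uniqueness of equilibria in Assumption~\ref{asp: EIP} (and, if the storage function is used more carefully, the detectability-type consequence that $\bm{y}\equiv\bm{y}^*$ implies $\bm{x}\to\bm{x}^*$), this invariant set reduces to the single point $(\bm{x}^*,\bm{s}^*)$. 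Care is needed here because EIP as stated in \eqref{eq: def_EIP} gives the dissipation inequality but not automatically zero-state detectability; I would either add this as the standard companion hypothesis used in the EIP literature (as in~\cite{hines2011equilibrium,arcak2016networks}) or argue it directly from strict passivity ($\rho>0$) plus the uniqueness clause. Once the invariant set is shown to be the equilibrium alone, LaSalle yields local asymptotic stability of $(\bm{x}^*,\bm{s}^*)$, and since $\bm{y}^*=\Bar{\bm{y}}$ was forced at the outset, the output tracking conclusion $\bm{y}^*=\Bar{\bm{y}}$ follows. I would carry out the steps in the order: (1) characterize the equilibrium and show $\bm{y}^*=\Bar{\bm{y}}$; (2) define $V$ and verify positive definiteness; (3) compute $\dot V$ and obtain the cancellation; (4) apply strict monotonicity of $\bm{p}$ for the sign; (5) close the argument with LaSalle and the EIP uniqueness/detectability to get asymptotic stability.
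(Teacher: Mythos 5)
Your proposal is correct and follows essentially the same route as the paper's proof: the same Lyapunov candidate $V=S(\bm{x},\bm{x}^*)+B(\bm{s},\bm{s}^*;\bm{\theta}^{(I)},\beta^{(I)})$, the same cancellation of the integral terms between $\dot S$ and $\dot B$, and the same use of strict monotonicity of $\bm{p}$ to arrive at $\dot V \le -\rho\|\bm{y}-\bm{y}^*\|^2$. The one place you go beyond the paper is the closing step: the paper simply asserts that $\dot V$ vanishes only at the equilibrium and invokes Lemma~\ref{prop: Lyapunov_stability}, whereas you correctly observe that this bound is only negative \emph{semi}definite in the augmented state $(\bm{x},\bm{s})$ (it is zero on the whole set $\{\bm{y}=\bm{y}^*\}$), so a LaSalle/invariance argument combined with the EIP uniqueness or a detectability-type hypothesis is genuinely needed to reduce the invariant set to $(\bm{x}^*,\bm{s}^*)$ --- a point the paper's own write-up elides.
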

Theorem~\ref{thm: output_level_stable} shows that Neural-PI control has provable guarantees on stability and zero steady-state output tracking error from the structures in~\eqref{eq:PI_SCNN}. Detailed proof could be found in Appendix~\ref{app:thm_tracking} and we sketch the proof below. At an equilibrium, the right side of~\eqref{subeq: s_bary} equals to zero gives  $\bm{y}^*=\Bar{\bm{y}}$.
We show that an equilibrium is asymptotically stable by constructing a Lyapunov function $V(\bm{x},\bm{s} )|_{\bm{x}^*,\bm{s}^*} =  S(\bm{x},\bm{x}^* )+B(\bm{s}, \bm{s}^*; \bm{\theta}^{(I)}, \bm{\beta}^{(I)})$.
Since $ \bm{r}(\bm{s})=\nabla_{\bm{s}} g^{(I)}(\bm{s}; \bm{\theta}^{(I)}, \beta^{(I)})$ by construction in~\eqref{eq:PI_SCNN}, the time derivative of the Bregman distance term is $\dot{B}(\bm{s}, \bm{s}^*; \bm{\theta}^{(I)}, \bm{\beta}^{(I)})=\left(\bm{r}(\bm{s})-\bm{r}\left(\bm{s}^*\right)\right)^\top\left(-(\bm{y}-\bm{y}^*)\right)$. Combining with the fact $\dot{S}\left(\bm{x}, \bm{x}^*\right) \leq-\rho\left\|\bm{y}-\bm{y}^*\right\|^{2}+\left(\bm{y}-\bm{y}^*\right)^\top\left(\bm{u}-\bm{u}^*\right)$ (from the EIP assumption), and 
$\bm{u} = \bm{p}(-\bm{y}+\Bar{\bm{y}}) +  \bm{r}(\bm{s})$,
we can conclude that the Lyapunov stability condition holds. 
\begin{remark}
    The satisfaction of the Lyapunov condition does not depend on the specifics of $\bm{f}(\cdot)$ (as long as it is EIP), making the stability certification robust to parameter changes for  systems satisfying Assumption~\ref{asp: EIP}. We will demonstrate this in the experiment on power system control.
\end{remark}


\vspace{-0.2cm}
\subsection{Training to improve transient performances} 


The Neural-PI controller can be trained by most model-based or model-free algorithms, since the stability and output-tracking are guaranteed by construction. 
Fig~\ref{fig:computation_graph_training} visualizes the detailed construction and the computation graph in the dynamical system in~\eqref{eq:node_dyn}. The trainable parameters $\bm{\Theta}:= \{\bm{\theta}^{(P)}, \bm{\beta}^{(P)}, \bm{\theta}^{(I)}, \bm{\beta}^{(I)}\}$  are contained in $\bm{p}(\cdot)$ and $\bm{r}(\cdot)$ functions, where both are parameterized as the gradients of SCNNs. $\bm{u} = \nabla_{-\bm{y}+\Bar{\bm{y}}} g^{(P)}(-\bm{y}+\Bar{\bm{y}}; \bm{\theta}^{(P)}, \beta^{(P)}) +   \nabla_{\bm{s}} g^{(I)}(\bm{s}; \bm{\theta}^{(I)}, \beta^{(I)})$
serves as control signal in the system defined in~\eqref{eq:node_dyn} that evolves through time. The loss function is defined as 
\begin{equation}\label{eq: loss}
 Loss(\bm{\Theta}) = \sum_{t=0}^T J(\bm{y}(t)-\Bar{\bm{y}}, \bm{u}(t)),   
\end{equation}
where $J(\cdot)$ is the transient cost function defined in~\eqref{eq:transient_cost}. The parameters $\bm{\Theta}$ are optimized via gradient descent to minimize this loss function ~\eqref{eq: loss}.

\begin{figure}[htbp]	
	\centering
	\includegraphics[width=\linewidth]{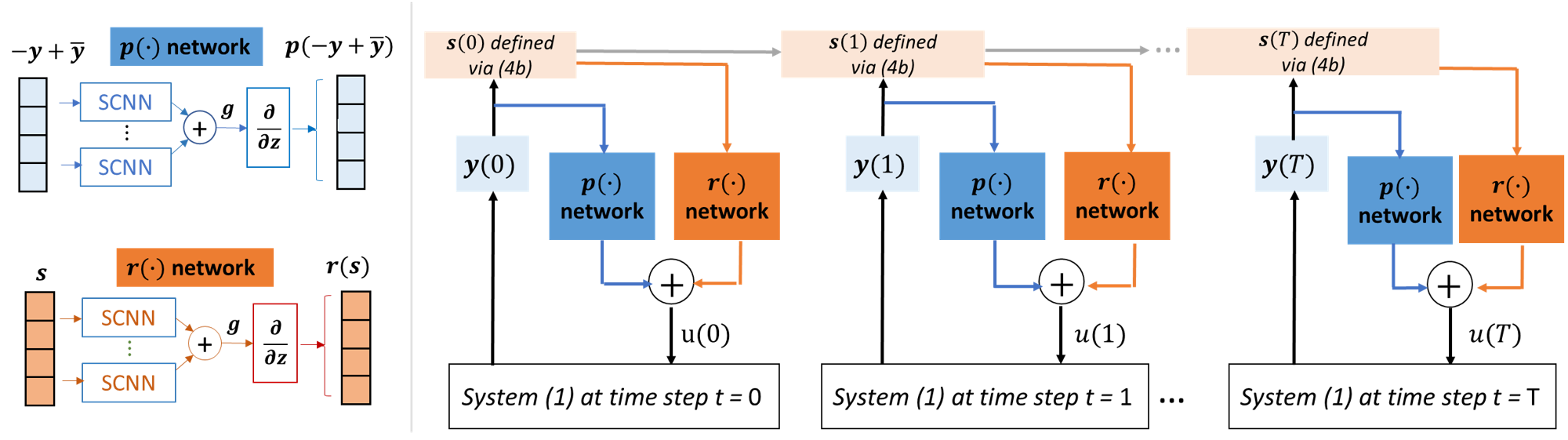}
	\vspace{-0.4cm}
	\caption{\small The computation graph for training the Neural-PI controllers.
	\vspace{-0.5cm}}
\label{fig:computation_graph_training}
\end{figure}

\vspace{-0.2cm}
\section{Experiments}
\label{sec:experiments}
\vspace{-0.2cm}
We end the paper with case studies demonstrating the effectiveness of the proposed Neural-PI control in two large-scale systems: vehicle platooning and power system frequency control. Detailed problem formulation, verification of assumptions, simulation setting, and results are provided in Appendix~\ref{subsec: exp_traffic} and~\ref{subsec: exp_power} in the supplementary material. All experiments are run with an NVIDIA Tesla T4 GPU with 16GB memory. 
Code is available at 
\href{https://github.com/Wenqi-Cui/MIMO-Neural-PI}{this link}.
\vspace{-0.3cm}

\subsection{Vehicle platooning}
\vspace{-0.2cm}
\textbf{Experiment setup.}
We conduct experiments on the vehicle platooning problem in Figure~\ref{fig:network}(b) with 20 vehicles ($n=20$), where  $\bm{u}$ is the control signal to adjust the velocities of vehicles and the output $\bm{y}$ is their actual  velocities. The objective is for the vehicles to reach the same setpoint velocity quickly with acceptable control effort.   
We train for 400 epochs, where each epoch trains with the loss~\eqref{eq: loss} averaged on 300 trajectories, and each trajectory evolves 6s from  random initial velocities. 
\textbf{Controller performance.} We compare the performance of 1) Neural-PI: the learned structured Neural-PI controllers parametrized by~\eqref{eq:PI_SCNN} with three layers and 20 neurons in each hidden layer and
2) DenseNN: Dense neural networks~\eqref{eq:ICNN} the same as Neural-PI with unconstrained weights. Both of them have no communication constriants.
Figure~\ref{fig:Loss_epi_main}(a) shows the transient and steady-state costs on 100 testing trajectories  starting from
randomly generated initial states. Neural-PI achieves much lower costs even though the weights in DenseNN are not constrained. 
Given $\Bar{y}=5$m/s, Figure~\ref{fig:Loss_epi_main}(b) and~\ref{fig:Loss_epi_main}(c) show the dynamics of selected nodes under DenseNN and Neural-PI, respectively. All the outputs track under  Neural-PI  but they are unstable under DenseNN (even though the training cost was not prohibitively high). In particular, DenseNN appears to be stable until about 10s, but states blows up quickly after that. Therefore, enforcing stabilizing structures  is essential.

\vspace{-0.4cm}
\begin{figure}[ht]
\centering
\subfloat[Transient and steady-state cost ]{\includegraphics[width=1.9in]{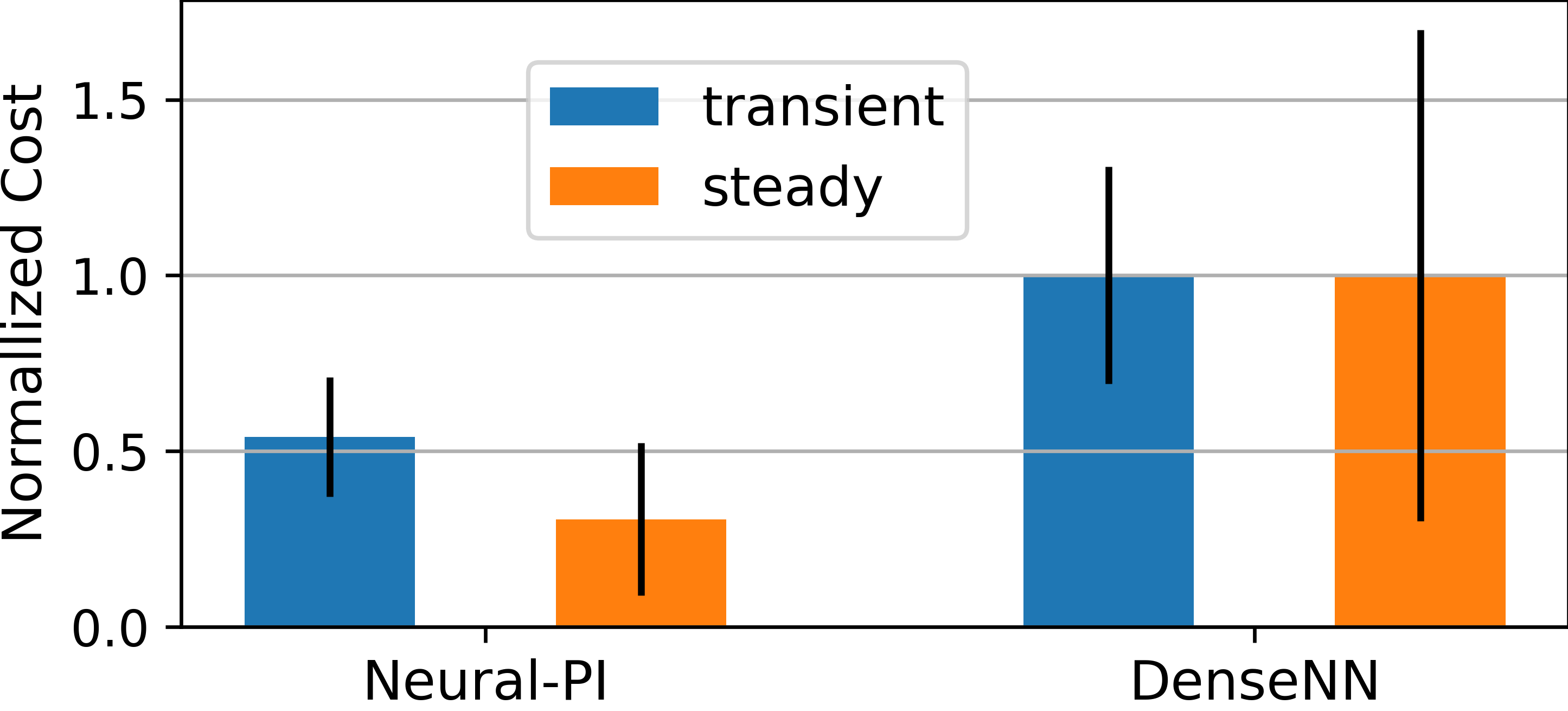}}%
\hfil
\subfloat[Velocities under DenseNN ]{\includegraphics[width=1.6in]{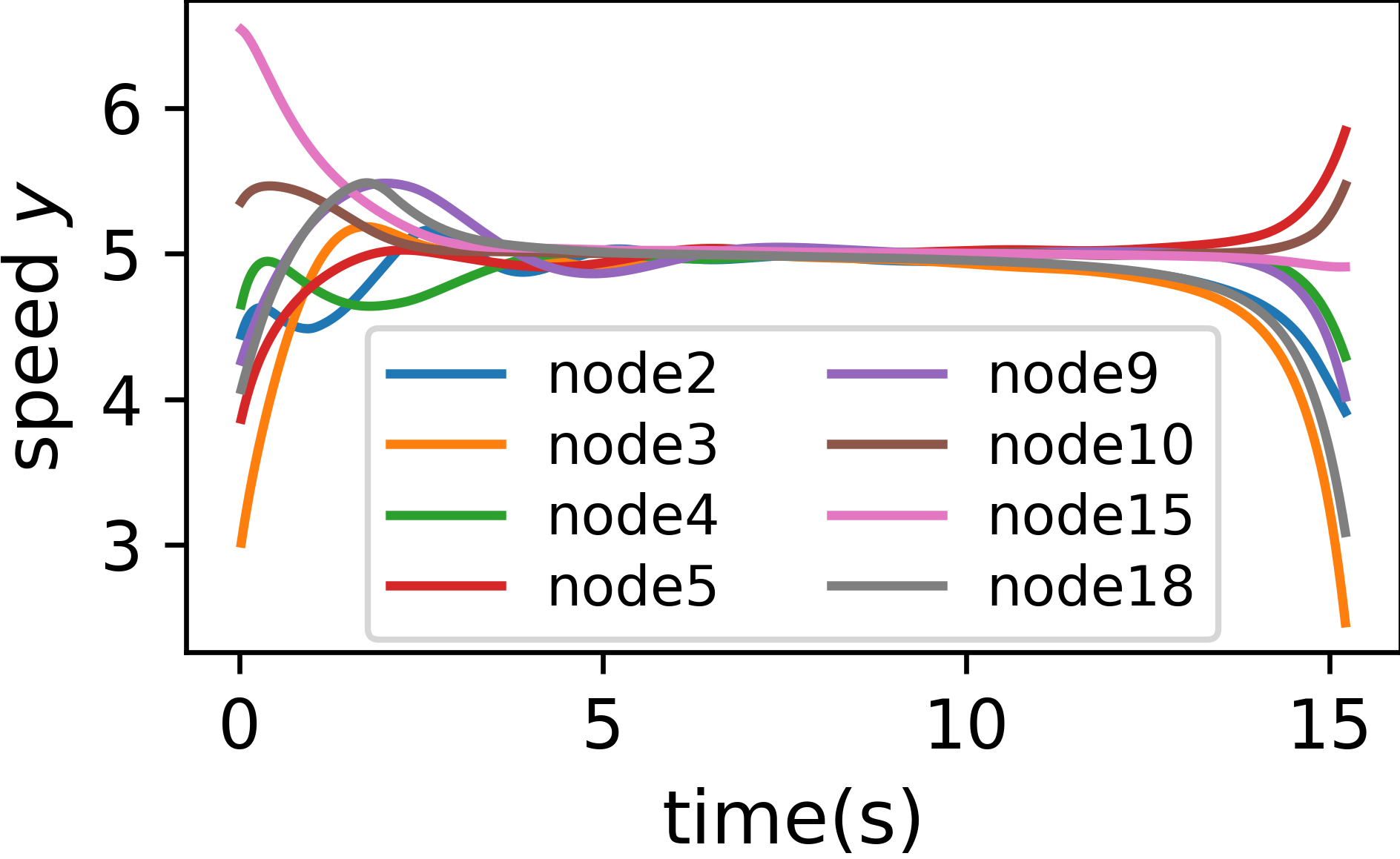}%
}
\hfil
\subfloat[Velocities under Neural-PI ]{\includegraphics[width=1.6in]{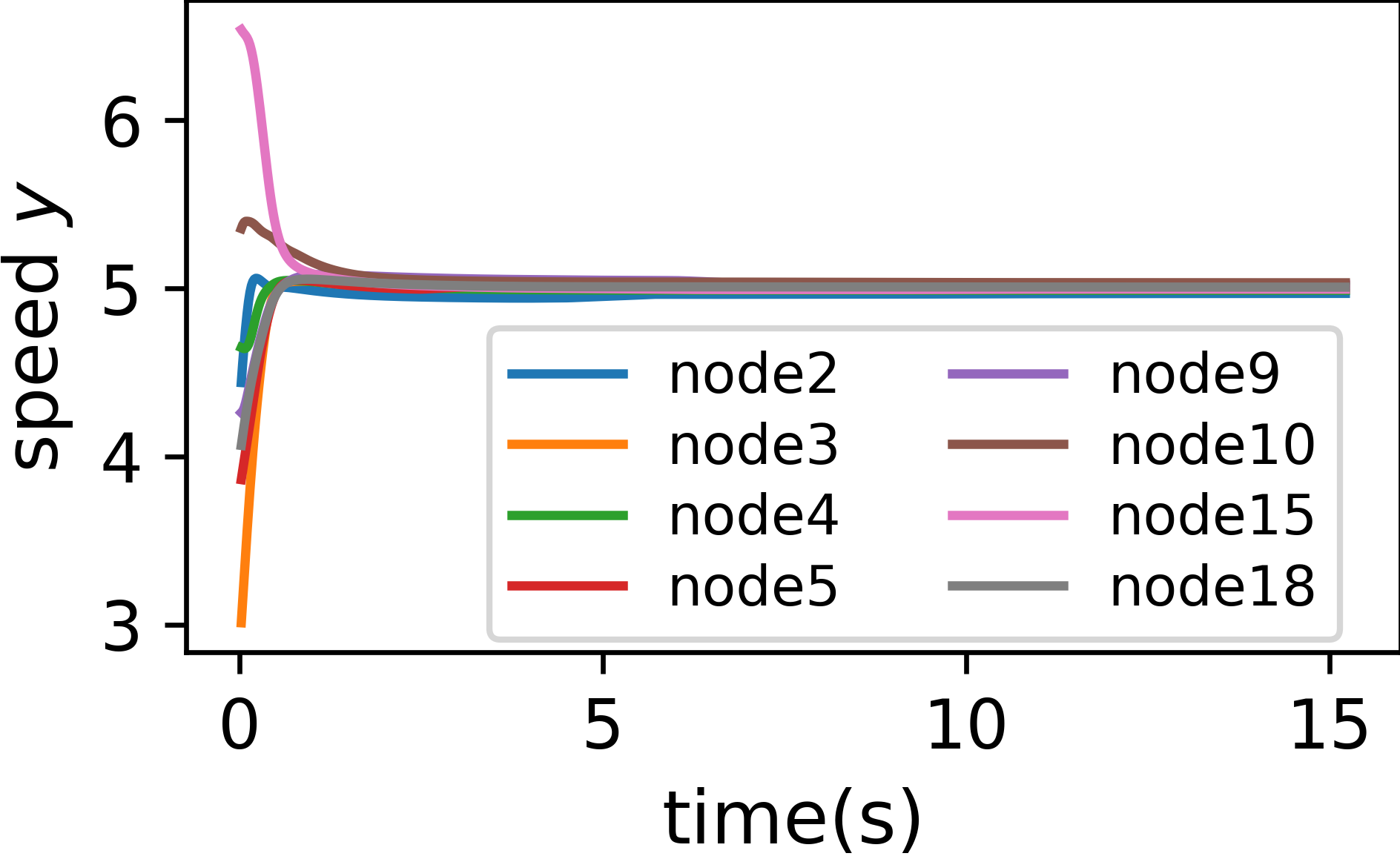}%
}
\vspace{-0.2cm}
\caption{\small (a) The average transient cost
and steady-state cost
with error bar on 100 testing trajectories starting from randomly generated  initial states. (b) The dynamics of DenseNN. (c) The dynamics of  Neural-PI.
\vspace{-0.4cm}}
\label{fig:Loss_epi_main}
\end{figure}

\subsection{Power systems frequency control}
\textbf{Experiment Setup.} The second experiment is the power system frequency control  on the IEEE $39$-bus New England system~\cite{athay1979practical}. 
The objective is to stabilize  generators at the required frequency $ \Bar{y}=60$Hz at the steady state while minimizing the transient control cost. Most systems do not have fully connected real-time communication capabilities, so the controller needs to respect the communication constraints and we show the flexibility of Neural-PI control under different communication structures.

\textbf{Impact of communication constraints.}
We compare the performance of Neural-PI controller where 1)  all the nodes can communicate 2) half of the nodes can communicate and 3) none of the nodes can communicate (thus the controller is decentralized), respectively.
The transient and steady-state costs are illustrated in  Figure~\ref{fig:Loss_epi_power_main}(a). Neural-PI-Full achieves the lowest transient and steady-state cost. Notably, the steady-state cost  significantly decreases with increased communication capability. The reason is that communication serves to better allocated control efforts such that they can maintain output tracking with smaller control costs. The  frequency dynamics are shown in Figure~\ref{fig:Loss_epi_power_main}(b)-(d), where under all settings Neural-PI controllers can stabilize to the setpoint (60Hz) and it converges the fastest under full communication.

\vspace{-0.3cm}
\begin{figure}[H]
\centering
\subfloat[Transient and steady cost ]{\includegraphics[width=1.95in]{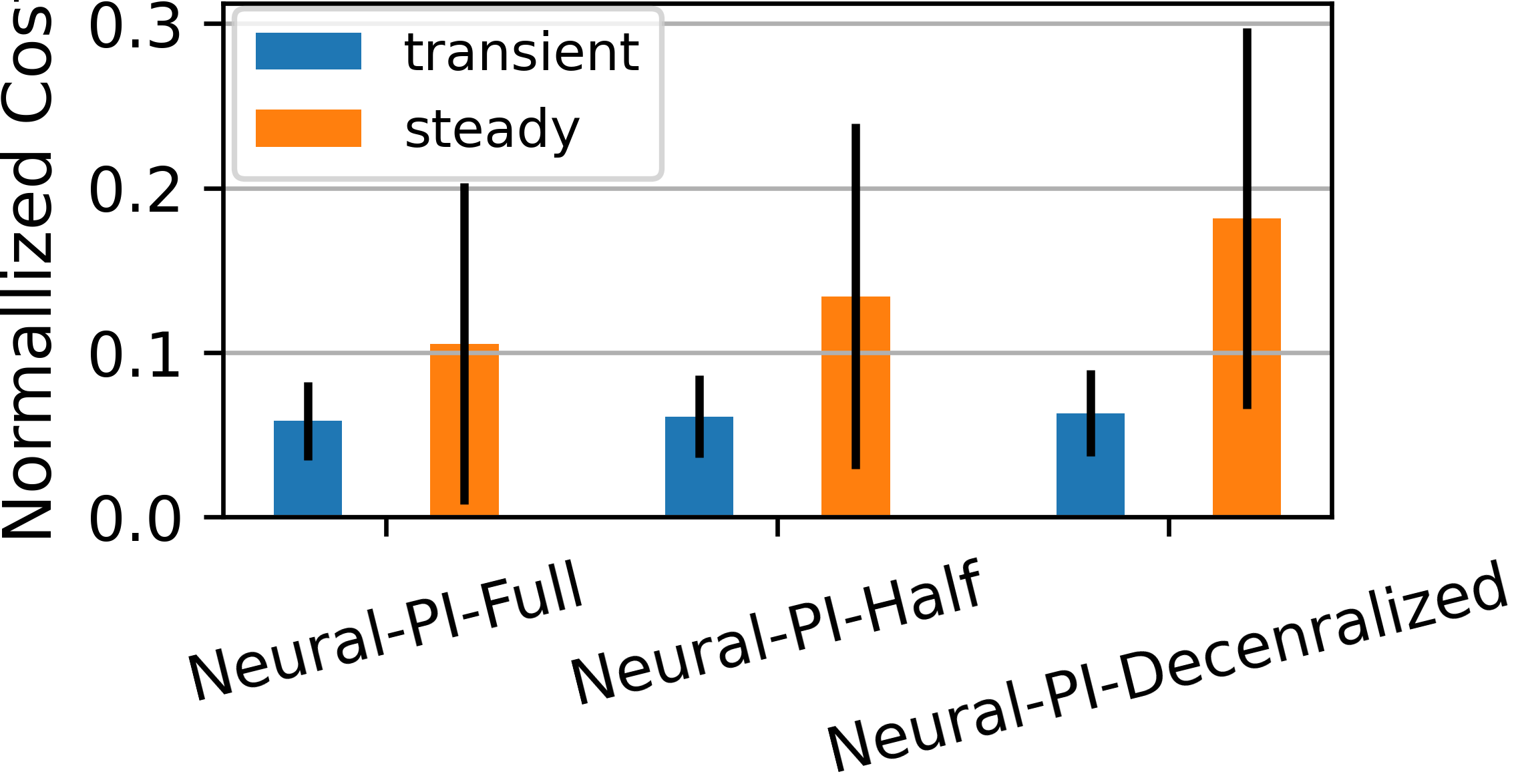}}%
\hfil
\subfloat[Full Comm]{\includegraphics[width=1.1in]{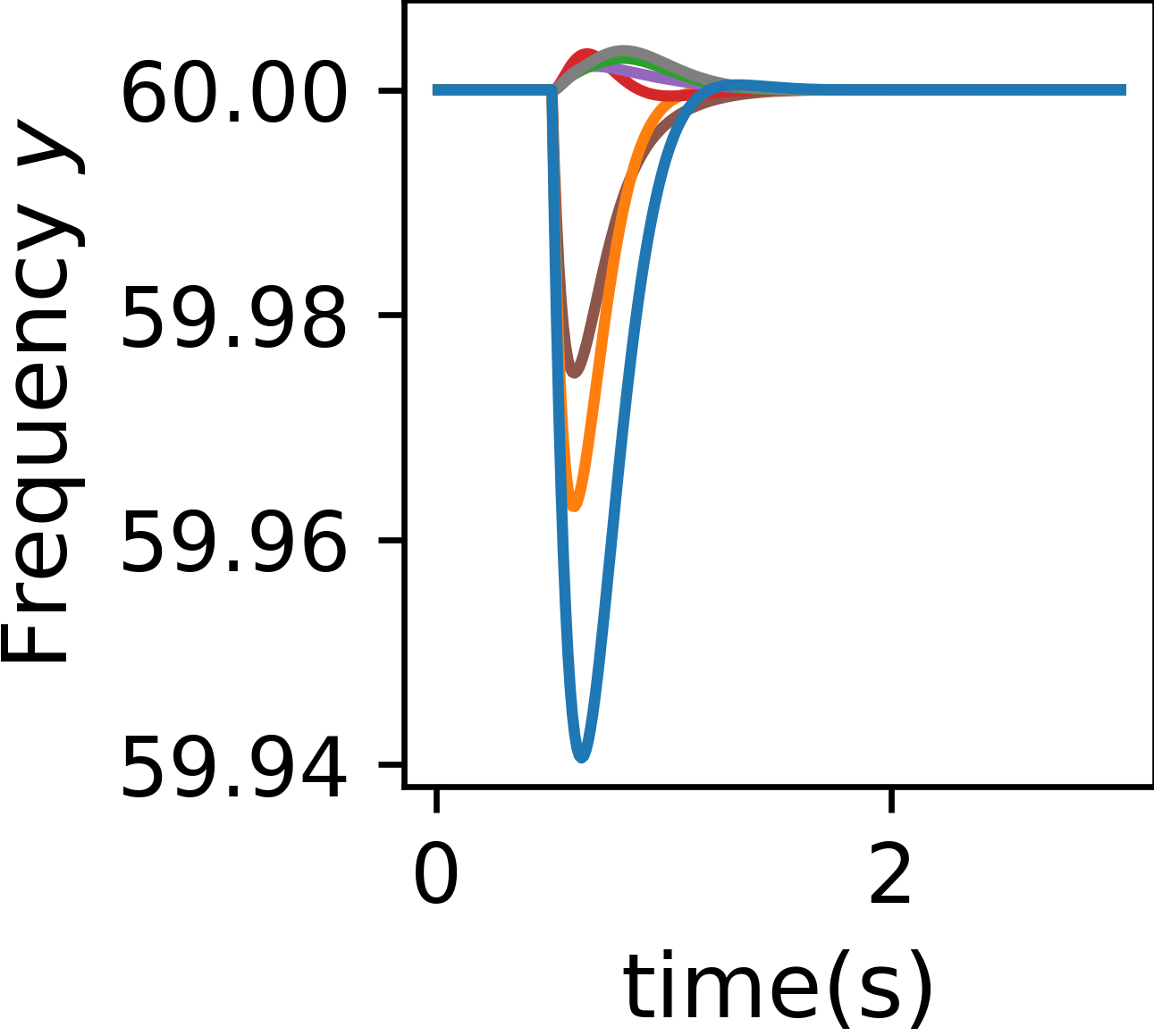}}%
\hfil
\subfloat[Half Comm]{\includegraphics[width=1.1in]{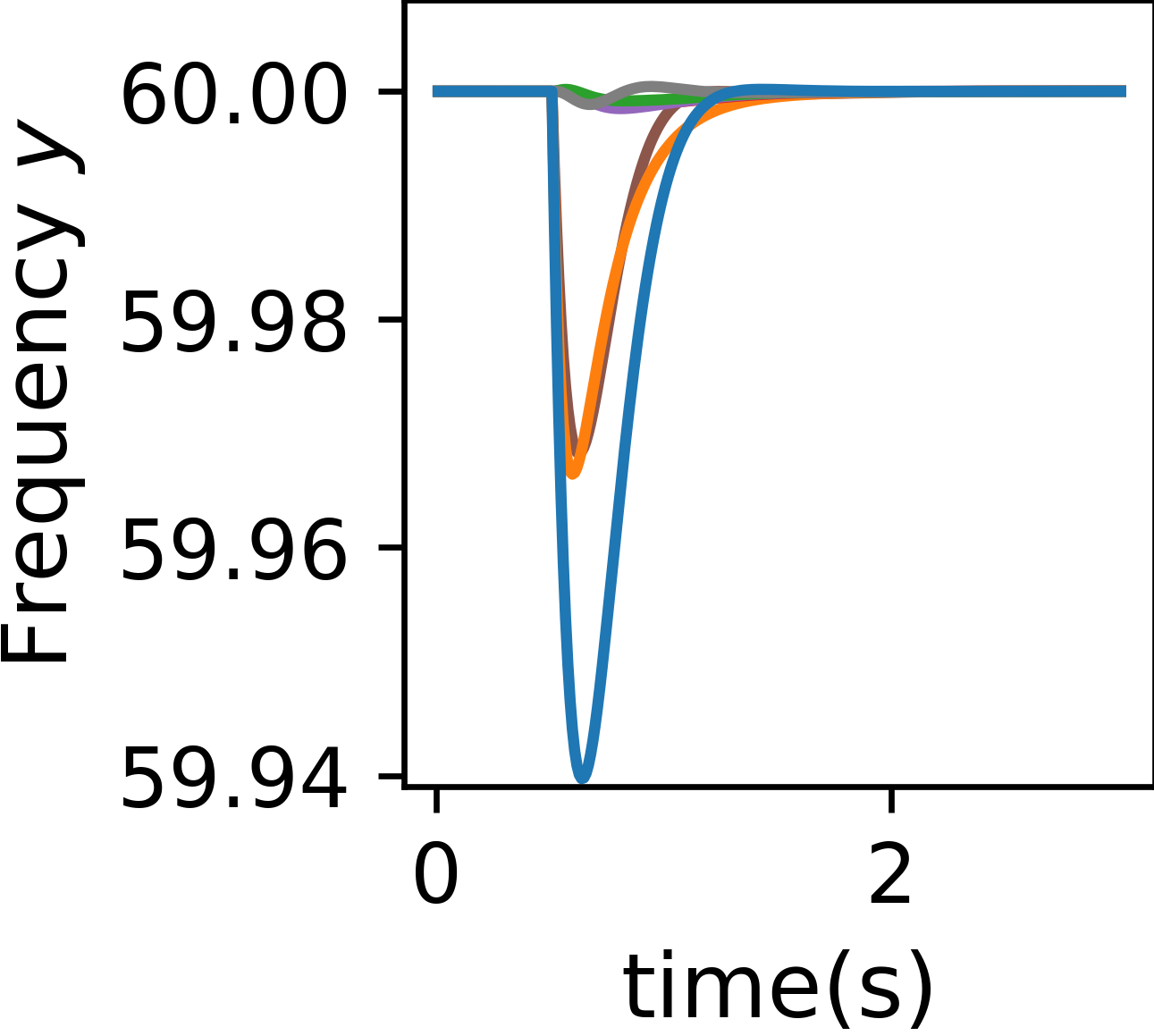}}%
\hfil
\subfloat[Decentralized]{\includegraphics[width=1.1in]{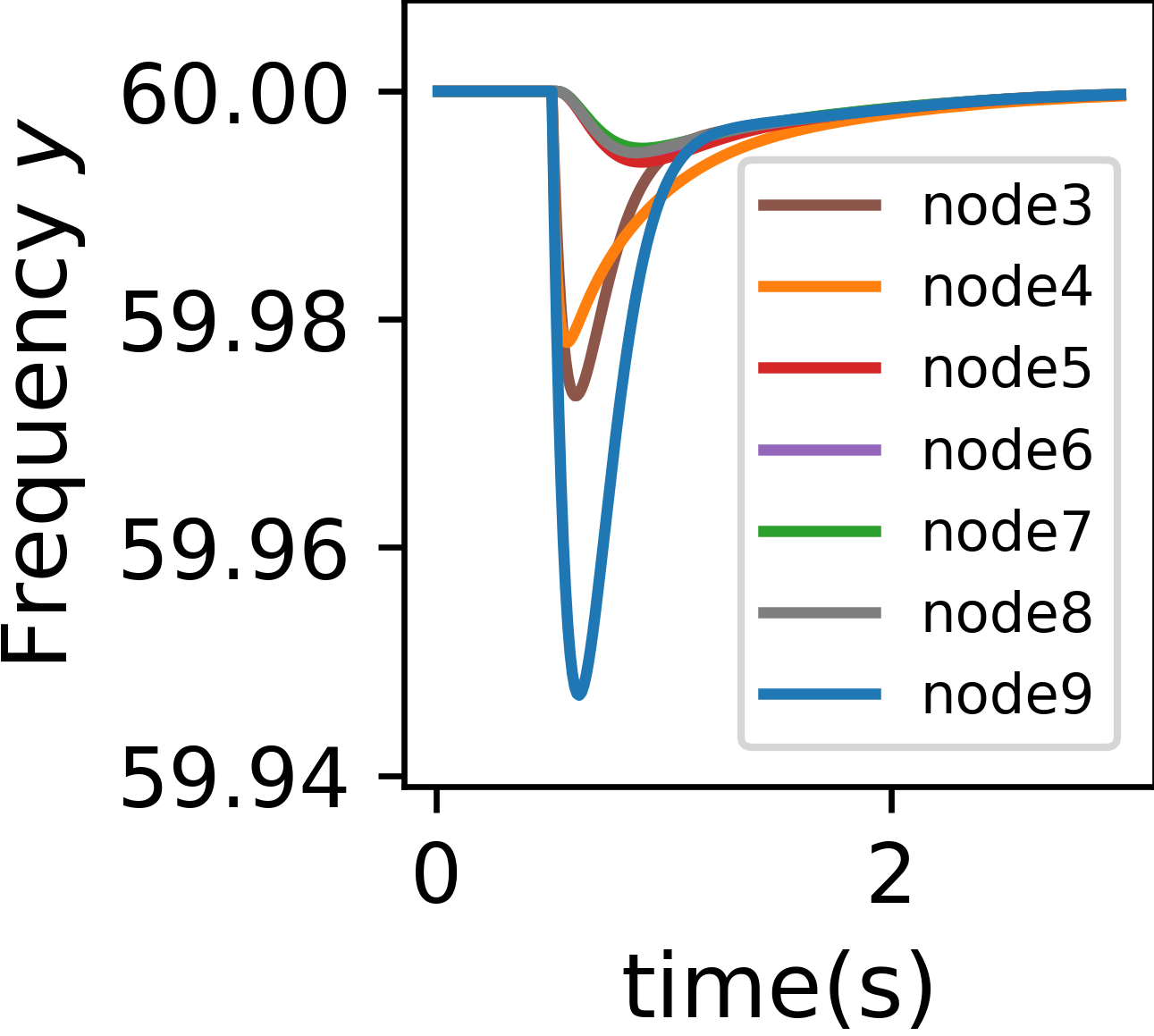}}%
\vspace{-0.2cm}
\caption{\small (a) The average transient 
and steady-state cost with error bar
on 100 testing trajectories for Neural-PI controller with different communication constraints.  The dynamics of  Neural-PI where (b) all nodes can communicate, (c)  half of nodes can communicate,  (d)  none nodes can communicate.
\vspace{-0.5cm} }
\label{fig:Loss_epi_power_main}
\end{figure}

\vspace{-0.2cm}
\section{Conclusions}\label{sec:conclusion}
\vspace{-0.1cm}
This paper proposes structured Neural-PI controllers for multiple-input multiple-output dynamic systems. We parameterize the P and I terms as the gradients of strictly convex neural networks. For a wide range of dynamic systems, we show that this controller structure provides end-to-end stability and zero output tracking error guarantees. Experiments demonstrate that the proposed approach can improve both transient and steady-state performances.
The limitations include the assumptions that the system needs to be equilibrium-independent passive and noise is not considered in the analysis. We see no immediate negative societal impact of this work.  

\bibliographystyle{unsrtnat}
\bibliography{ref}

\newpage
\appendix




\section{Proof}\label{sec: supp_theory}

\subsection{Proof of Theorem~\ref{thm: Universal}}\label{app: Universal}
We leverage the results in~\cite{amos2017input, chen2018optimal} that the structure~\eqref{eq:ICNN} with ReLU activation is a universal approximation for any convex function. However, ReLU activations are not strictly convex, and thus we design the Softplus-$\beta$ activation. By showing that the structure~\eqref{eq:ICNN} with Softplus-$\beta$ activation can approximate neural networks with the ReLU activations arbitrarily closely when $\beta$ is sufficiently large, we then prove that the structure~\eqref{eq:ICNN} with Softplus-$\beta$ can universally approximate any strictly convex function.

To prepare for the proof of Theorem~\ref{thm: Universal}, we first derive the following Lemma about the difference between ReLU activations and Softplus-$\beta$ activation.
\begin{lemma}\label{lem: diff_softplus}
 Consider the ReLU activation $\sigma_l^{ReLU}(x) :=\max(x,0)$. For all $x\in\real$ and $\Delta>0$, we have
$0<\left(\sigma_l^{\text{Softplus}\beta}(x+\Delta )-\sigma_l^{\text{ReLU}}(x)\right)
   < \Delta+ \frac{1}{\beta}\log(2).   $
\end{lemma}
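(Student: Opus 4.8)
The plan is to reduce Lemma~\ref{lem: diff_softplus} to a pointwise comparison between the softplus-$\beta$ map and the ReLU map, and then substitute $y = x+\Delta$. First I would establish that for every $y\in\real$ and every $\beta>0$,
\begin{equation*}
\max(y,0)<\sigma_l^{\text{Softplus}\beta}(y)\le \max(y,0)+\tfrac1\beta\log 2 ,
\end{equation*}
with the right-hand inequality strict whenever $y\neq 0$. Both bounds follow from monotonicity of $\log$ after splitting on the sign of $y$: for $y\ge 0$ one uses $e^{\beta y}<1+e^{\beta y}\le 2e^{\beta y}$ (strict on the right iff $y>0$), and for $y<0$ one uses $1<1+e^{\beta y}<2$; dividing through by $\beta$ gives the claim.

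For the lower bound in the lemma, I apply the strict left inequality at $y=x+\Delta$ and then use that $t\mapsto\max(t,0)$ is nondecreasing together with $x+\Delta>x$:
\begin{equation*}
\sigma_l^{\text{Softplus}\beta}(x+\Delta)>\max(x+\Delta,0)\ge\max(x,0)=\sigma_l^{\text{ReLU}}(x),
\end{equation*}
which is exactly positivity of the difference.

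For the upper bound I chain the right inequality at $y=x+\Delta$ with the elementary estimate $\max(x+\Delta,0)\le\max(x,0)+\Delta$ (subadditivity of $\max(\cdot,0)$, using $\max(\Delta,0)=\Delta$):
\begin{equation*}
\sigma_l^{\text{Softplus}\beta}(x+\Delta)\le\max(x+\Delta,0)+\tfrac1\beta\log 2\le\max(x,0)+\Delta+\tfrac1\beta\log 2=\sigma_l^{\text{ReLU}}(x)+\Delta+\tfrac1\beta\log 2 .
\end{equation*}
What remains is to show that at least one of the two ``$\le$'' signs above is actually strict. For this I would split into $x\ge 0$ and $x<0$: if $x\ge 0$ then $x+\Delta>0$, so the first (softplus) inequality is strict; if $x<0$ then $x+\Delta<\Delta$ and $0<\Delta$, hence $\max(x+\Delta,0)<\Delta=\max(x,0)+\Delta$ and the second inequality is strict. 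Either way the claimed strict upper bound follows.

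The argument uses no earlier result and is entirely elementary; the only point requiring care — and the one I expect to be the main (minor) obstacle — is carrying the strict inequalities across the kink of the ReLU at $y=0$, equivalently through the boundary cases $x=0$ and $x+\Delta=0$. Away from the kink both bounds are comfortably loose, so the bookkeeping there is routine.
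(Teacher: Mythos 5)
Your proof is correct, and it takes a slightly different route from the paper's. The paper splits the difference through the intermediate point $\sigma_l^{\text{Softplus}\beta}(x)$, i.e., it writes the quantity as $\bigl(\sigma_l^{\text{Softplus}\beta}(x+\Delta)-\sigma_l^{\text{Softplus}\beta}(x)\bigr)+\bigl(\sigma_l^{\text{Softplus}\beta}(x)-\sigma_l^{\text{ReLU}}(x)\bigr)$, bounding the first term by $\Delta$ using that the softplus derivative lies strictly in $(0,1)$, and the second term by $\tfrac1\beta\log 2$ via a calculus argument on the derivative of the gap. You instead split through $\sigma_l^{\text{ReLU}}(x+\Delta)$: a pointwise envelope $\max(y,0)<\sigma_l^{\text{Softplus}\beta}(y)\le\max(y,0)+\tfrac1\beta\log 2$ (your version of the paper's second bound, derived algebraically from $e^{\beta y}<1+e^{\beta y}\le 2e^{\beta y}$ and $1<1+e^{\beta y}<2$ rather than by differentiation) combined with subadditivity of the ReLU, $\max(x+\Delta,0)\le\max(x,0)+\Delta$. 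The trade-off is in where strictness comes from: the paper gets the strict upper bound for free because the softplus increment is strictly less than $\Delta$, whereas your ReLU increment bound can be an equality, forcing the case split on the sign of $x$ that you correctly carry out. Your version is entirely calculus-free and the envelope bound is a clean standalone fact; the paper's version localizes all the strictness in the smooth term. Both arguments are complete and yield exactly the stated two-sided bound.
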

\begin{proof}
Note that
$$\left(\sigma_l^{\text{Softplus}\beta}(x+\Delta )-\sigma_l^{\text{ReLU}}(x)\right)=\left(\sigma_l^{\text{Softplus}\beta}(x+\Delta )-\sigma_l^{\text{Softplus}\beta}(x)\right)+\left(\sigma_l^{\text{Softplus}\beta}(x)-\sigma_l^{\text{ReLU}}(x)\right) ,
$$  
we prove the lemma by deriving the bound for $\left(\sigma_l^{\text{Softplus}\beta}(x+\Delta )-\sigma_l^{\text{Softplus}\beta}(x)\right)$ and $\left(\sigma_l^{\text{Softplus}\beta}(x)-\sigma_l^{\text{ReLU}}(x)\right)$ as follows
\begin{enumerate}[noitemsep,nolistsep,leftmargin=*, label=(\roman*)]
    \item Since $\frac{\mathrm{d} \sigma_l^{\text{Softplus}\beta}(x)}{\mathrm{d} x}=e^{\beta x}/(1+e^{\beta x})\in (0,1)$, then $0<\left(\sigma_l^{\text{Softplus}\beta}(x+\Delta )-\sigma_l^{\text{Softplus}\beta}(x)\right)<\Delta $ for $\Delta>0$. 
    \item
    Explicitly represent $\sigma_l^{\text{Softplus}\beta}(x)-\sigma_l^{\text{ReLU}}(x)$ yields
    $$    \sigma_l^{\text{Softplus}\beta}(x)-\sigma_l^{\text{ReLU}}(x)=\left\{\begin{matrix}
 \frac{1}{\beta}\log(1+e^{\beta x})-x,& x\geq 0 \\ 
 \frac{1}{\beta}\log(1+e^{\beta x}),& x<0 
\end{matrix}\right. 
    $$
    Thus,
        $$    \frac{\mathrm{d}\left(\sigma_l^{\text{Softplus}\beta}(x)-\sigma_l^{\text{ReLU}}(x)\right)}{\mathrm{d} x}:=\left\{\begin{matrix}
 -1/(1+e^{\beta x}),& x\geq 0 \\ 
  e^{\beta x}/(1+e^{\beta x}),& x<0 
\end{matrix}\right. 
    $$
   and therefore $\left(\sigma_l^{\text{Softplus}\beta}(x)-\sigma_l^{\text{ReLU}}(x)\right)\leq \left(\sigma_l^{\text{Softplus}\beta}(0)-\sigma_l^{\text{ReLU}}(0)\right)=\frac{1}{\beta}\log(2)$. Note that $\frac{1}{\beta}\log(1+e^{\beta x})>\frac{1}{\beta}\log(e^{\beta x})=x$, then $0<\left(\sigma_l^{\text{Softplus}\beta}(x)-\sigma_l^{\text{ReLU}}(x)\right)\leq \frac{1}{\beta}\log(2)$.

\end{enumerate}
Combining (i) and (ii), $0<\left(\sigma_l^{\text{Softplus}\beta}(x+\Delta )-\sigma_l^{\text{ReLU}}(x)\right)
   < \Delta+ \frac{1}{\beta}\log(2).   $

\end{proof}

The proof of Theorem~\ref{thm: Universal} is given below.
\begin{proof}
   Previous works~\cite{amos2017input, chen2018optimal} have shown that the structure~\eqref{eq:ICNN} with ReLU activation is a universal approximation for any convex function. Hence, for any $q(\bm{z}):\mathcal{Z}\mapsto\real$ , there exists $g(\bm{z}; \bm{\theta})^{\text{ReLU}}$ constructed by~\eqref{eq:ICNN} where the activation function is ReLU and satisfying $|g(\bm{z}; \bm{\theta})^{\text{ReLU}}-q(\bm{z})|<\frac{1}{2}\epsilon$. Note that 
$$
|g(\bm{z}; \bm{\theta})^{\text{Softplus}\beta}- q(\bm{z})|\leq |g(\bm{z}; \bm{\theta})^{\text{Softplus}\beta}-g(\bm{z}; \bm{\theta})^{\text{ReLU}}|+|g(\bm{z}; \bm{\theta})^{\text{ReLU}}-q(\bm{z})|,
$$
then it suffices to prove $|g(\bm{z}; \bm{\theta})^{\text{Softplus}\beta}- q(\bm{z})|<\epsilon$
by showing the existence of $g(\bm{z}; \bm{\theta})^{\text{Softplus}\beta}$ such that $|g(\bm{z}; \bm{\theta})^{\text{Softplus}\beta}-g(\bm{z}; \bm{\theta})^{\text{ReLU}}|<\frac{1}{2}\epsilon$ for all $\bm{z} \in \mathcal{Z}$.

    Next, we compute the difference of the structure~\eqref{eq:ICNN} with softplus-$\beta$  and ReLU  activations by tracing through the first layer to the last layer, under the same weights $\bm{\theta}=\left\{W_{0: k-1}^{(z)}, W_{1: k-1}^{(o)}, b_{0: k-1}\right\}$. 
    
    The difference between the output of the first layer in $g(\bm{z}; \bm{\theta})^{\text{Softplus}\beta}$ and $g(\bm{z}; \bm{\theta})^{\text{ReLU}}$  is
   \begin{equation}  \bm{o}_{1}^{\text{Softplus}\beta}-\bm{o}_{1}^{\text{ReLU}}=\sigma_1^{\text{Softplus}\beta}\left(W_0^{(z)} \bm{z}+b_0\right)-\sigma_1^{\text{ReLU}}\left(W_0^{(z)} \bm{z}+b_0\right),
   \end{equation}
   which yields 
   $\mathbbold{0} < \bm{o}_{1}^{\text{Softplus}\beta}-\bm{o}_{1}^{\text{ReLU}}\leq \frac{1}{\beta}\log(2) \mathbbold{1}
   $
   by Lemma~\ref{lem: diff_softplus}.

 The difference of the second layer is
   \begin{equation}
   \begin{split}
        \bm{o}_{2}^{\text{Softplus}\beta}-\bm{o}_{2}^{\text{ReLU}}
        &=\sigma_2^{\text{Softplus}\beta}\left(W_1^{(o)} \bm{o}_1^{\text{Softplus}\beta}+W_2^{(z)} \bm{z}+b_2\right)-\sigma_2^{\text{ReLU}}\left(W_1^{(o)} \bm{o}_1^{\text{ReLU}}+W_2^{(z)} \bm{z}+b_2\right)\\
        &=\sigma_2^{\text{Softplus}\beta}\left(W_1^{(o)} \left(\bm{o}_1^{\text{Softplus}\beta}-\bm{o}_1^{\text{ReLU}}\right)+W_1^{(o)} \bm{o}_1^{\text{ReLU}}+W_2^{(z)} \bm{z}+b_2\right)\\
         &\quad-\sigma_2^{\text{ReLU}}\left(W_1^{(o)} \bm{o}_1^{\text{ReLU}}+W_2^{(z)} \bm{z}+b_2\right).
   \end{split}
   \end{equation}
   Since all the element of $W_1^{(o)}$ is positive, we have $\mathbbold{0} <W_1^{(o)} \left(\bm{o}_1^{\text{Softplus}\beta}-\bm{o}_1^{\text{ReLU}}\right)\leq \frac{1}{\beta}\log(2) W_1^{(o)}\mathbbold{1}$. Applying Lemma~\ref{lem: diff_softplus} element-wise yields  $\mathbbold{0}\leq \bm{o}_{2}^{\text{Softplus}\beta}-\bm{o}_{2}^{\text{ReLU}}\leq \frac{1}{\beta}\log(2) W_1^{(o)}  \mathbbold{1} +\frac{1}{\beta}\log(2) \mathbbold{1}$.

   Similarily $\mathbbold{0}\leq \bm{o}_{3}^{\text{Softplus}\beta}-\bm{o}_{3}^{\text{ReLU}}\leq \frac{1}{\beta}\log(2) W_2^{(o)}W_1^{(o)}  \mathbbold{1} +\frac{1}{\beta}\log(2) W_2^{(o)}  \mathbbold{1} +\frac{1}{\beta}\log(2) \mathbbold{1}$. By induction
   \begin{equation}
       \mathbbold{0}\leq \bm{o}_{l+1}^{\text{Softplus}\beta}-\bm{o}_{l+1}^{\text{ReLU}}\leq \frac{1}{\beta}\log(2)\left(\mathbbold{1}+\left(\sum_{i=1}^l \prod_{j=i}^l W_j^{(o)}\mathbbold{1}   \right)\right)
   \end{equation}
   Hence,
   \begin{equation}
       0\leq g(\bm{z}; \bm{\theta})^{\text{Softplus}\beta}- g(\bm{z}; \bm{\theta})^{\text{ReLU}}\leq \frac{1}{\beta}\log(2)\left(1+\left(\sum_{i=1}^{k-1} \prod_{j=i}^{k-1} W_j^{(o)} \mathbbold{1}  \right)\right),
   \end{equation}
    where $\prod_{j=i}^{k-1} W_j^{(o)}:=  W_{k-1}^{(o)}W_{k-2}^{(o)}\cdots W_{i}^{(o)}$.

    Let $\beta >\frac{2}{\epsilon}\log(2)\left(1+\left(\sum_{i=1}^{k-1} \prod_{j=i}^{k-1} W_j^{(o)}  \right)\mathbbold{1} \right) $, then $ 0\leq g(\bm{z}; \bm{\theta})^{\text{Softplus}\beta}- g(\bm{z}; \bm{\theta})^{\text{ReLU}}\leq \frac{1}{2}\epsilon$. We complete the proof using 
$$
|g(\bm{z}; \bm{\theta})^{\text{Softplus}\beta}- q(\bm{z})|\leq |g(\bm{z}; \bm{\theta})^{\text{Softplus}\beta}-g(\bm{z}; \bm{\theta})^{\text{ReLU}}|+|g(\bm{z}; \bm{\theta})^{\text{ReLU}}-q(\bm{z})|<\frac{1}{2}\epsilon+\frac{1}{2}\epsilon=\epsilon.
$$

\end{proof}

\subsection{Proof of Theorem~\ref{thm: output_level_stable}}\label{app:thm_tracking}
\begin{proof}
At an equilibrium, the right side of~\eqref{subeq: s_bary} equals to zero gives  $\bm{y}^*=\Bar{\bm{y}}$ and the corresponding set of equlibrium  $\mathcal{E}=\left\{\bm{x}^*,\bm{s}^*| \bm{f}(\bm{x}^*, \bm{r}(\bm{s}^*))=\bm{0}, \bm{y}^*=\Bar{\bm{y}} , \bm{h}(\bm{x}^*)=\bm{y}^*\right\}$. We construct a Lyapunov function to prove that if  there is  a feasible equilibrium in $\mathcal{E}$, then the system is locally asymptotically stable around it.

We construct a Lyapunov function using the storage function in Assumption~\ref{asp: EIP} and the Bregman distance in Lemma~\ref{lemma: Bregman} as 
\begin{equation}\label{eq:Lyapunov_function}
    V(\bm{x},\bm{s} )|_{\bm{x}^*,\bm{s}^*} =  S(\bm{x},\bm{x}^* )+B(\bm{s}, \bm{s}^*; \bm{\theta}^{(I)}, \bm{\beta}^{(I)}),
\end{equation}
where the functions by construction satisfy $S(\bm{x},\bm{x}^* )\geq 0 $, $B(\bm{s}, \bm{s}^*; \bm{\theta}^{(I)}, \bm{\beta}^{(I)})\geq 0 $ with equality holds only when $\bm{x}=\bm{x}^*$ and $\bm{s}=\bm{s}^*$, respectively. Hence,  $V(\bm{x},\bm{s} )|_{\bm{x}^*,\bm{s}^*}$  is a well-defined function that is positive definite and equals to zero at the equilibrium.

To prepare for the calculation of the time derivative of the Lyapunov function, we start by calculating the time derivative of the function $B(\bm{s}, \bm{s}^*; \bm{\theta}^{(I)}, \bm{\beta}^{(I)})$:
\begin{equation}\label{eq:dot_bregman_s}
\begin{split}
    \dot{B}(\bm{s}, \bm{s}^*; \bm{\theta}^{(I)}, \bm{\beta}^{(I)}) &=\left(\nabla_{\bm{s}}g^{(I)}(\bm{s}; \bm{\theta}^{(I)}, \beta^{(I)})-\nabla_{\bm{s}}g^{(I)}(\bm{s}^*; \bm{\theta}^{(I)}, \beta^{(I)})\right)^\top \dot{\bm{s}}
    \\
    & \stackrel{\circled{1}}{=}\left(\bm{r}(\bm{s})-\bm{r}\left(\bm{s}^*\right)\right)^\top\left(-(\bm{y}-\bm{y}^*)\right),  
\end{split}
\end{equation}
where $\circled{1}$ follows from $\nabla_{\bm{s}}g^{(I)}(\bm{s}; \bm{\theta}^{(I)}, \beta^{(I)})=\bm{r}\left(\bm{s}\right)$ and $\dot{\bm{s}}=\left(-(\bm{y}-\Bar{\bm{y}})\right) =\left(-(\bm{y}-\bm{y}^*)\right)$.

The time derivative of the Lyapunov function is
\begin{equation}\label{eq:dotLyap2}
\begin{split}
    \dot{V}(\bm{x},\bm{s})|_{\bm{x}^*,\bm{s}^*}  
    =& \dot{S}(\bm{x},\bm{x}^* )+\dot{B}(\bm{s}, \bm{s}^*; \bm{\theta}, \beta)
    \\
     \stackrel{\circled{1}}{\leq} &
    -\rho\left\|\bm{y}-\bm{y}^*\right\|^{2}
    +(\bm{y}-\bm{y}^*)^{\top}(\bm{u}-\bm{u}^*)
    +\left(\bm{r}(\bm{s})-\bm{r}\left(\bm{s}^*\right)\right)^\top\left(-(\bm{y}-\bm{y}^*)\right)        
    \\
    \stackrel{\circled{2}}{=}&-\rho\left\|\bm{y}-\bm{y}^*\right\|^{2}+(\bm{y}-\bm{y}^*)^{\top}(\bm{p}(-\bm{y}+\Bar{\bm{y}})-\bm{p}(-\bm{y}^*+\Bar{\bm{y}}))\\
    \stackrel{\circled{3}}{\leq} &-\rho\left\|\bm{y}-\bm{y}^*\right\|^{2}
\end{split}
\end{equation}
where $\circled{1}$ follows from the strict EIP property  and equations derived in~\eqref{eq:dot_bregman_s}. The equality $\circled{2}$ is derived by plugging in the controller design in~\eqref{subeq: PIcontroller1_u} where $\bm{u} = \bm{p}(-\bm{y}+\Bar{\bm{y}}) +  \bm{r}(\bm{s})$ and $\bm{u}^* =\bm{p}(-\bm{y}^*+\Bar{\bm{y}}) +  \bm{r}(\bm{s}^*)$.
 The inequality
$\circled{3}$ uses strictly monotone property of  $\bm{p}(\cdot)$ .

Therefore, $\dot{V}(\bm{x},\bm{s})|_{\bm{x}^*,\bm{s}^*} \leq0$ with equality only holds at the  equilibrium. By Lyapunov stability theory in Proposition~\ref{prop: Lyapunov_stability}, the system is locally asymptotically stable around the equilibrium.  

\end{proof}


\section{Experiments}\label{sec: supp_experiment}

We  demonstrate the effectiveness of the proposed neural-PI control in two dynamical systems: vehicle platooning and power system frequency control. All experiments are run with an NVIDIA Tesla T4 GPU with 16GB memory.  For completeness, the figures highlighted in Section~\ref{sec:experiments} are also shown below with more thorough discussions. Code is available at 
\href{https://github.com/Wenqi-Cui/MIMO-Neural-PI}{this link}. 


\subsection{Vehicle platooning}\label{subsec: exp_traffic}
The first experiment is the vehicle platoon control in~\cite{coogan2014dissipativity,burger2014duality} with $m$ vehicles, where  $\bm{u}\in\real^m$ is the control signal to adjust the velocities of vehicles and the output $\bm{y}\in\real^m$ is their actual  velocities. 
The state is $\bm{x}=(\bm{\zeta},\bm{y})$, where $\bm{\zeta}\in\real^m$ is the relative position of vehicles with $\bm{\zeta}(0) \perp  Im(\mathbbold{1}_m)$ (namely, the vehicles are not in the same position at the time step 0).
The dynamic model is
\begin{equation}\label{eq:traffic_dyn_node}
\begin{split}
\dot{\bm{\zeta}}&=\bm{\Gamma }\bm{y},\\
\dot{\bm{y}}&=\hat{\bm{\kappa}}\left(-(\bm{y}-\bm{\lambda}^{0})+\hat{\bm{\rho} }\left(\bm{u}-\bm{E}\hat{\bm{D}}\bm{E}^\top\bm{\zeta}\right)\right),
\end{split}
\end{equation}
where $\hat{\bm{\kappa}}=\text{diag}(\kappa_1,\cdots,\kappa_m), \hat{\bm{\rho} }=\text{diag}(\rho_1,\cdots,\rho_m)\in\real^{m\times m}$ are constant diagonal matrices and $\kappa_i>0,\rho_i>0$  for all $i=1,\cdots,m$. The vector $\bm{\lambda}^{0}=(\lambda_1^0,\cdots,\lambda_m^0)\in\real^m $ reflects the default velocity of vehicles. The matrix $\bm{E}\in\real^{m\times e}$ is the incidence matrix
that indicates the neighbouring relations for $e$ pairs of neighbouring  vehicles and satisfy $ker(\bm{E}^\top)=Im(\mathbbold{1}_m)$.  The matrix $\bm{\Gamma }:=\bm{I}_m-\frac{1}{m}\mathbbold{1}_m\mathbbold{1}_m^\top$  extracts the relative velocities of vehicles by $\bm{\Gamma }\bm{y}$.  The diagonal matrix $\hat{\bm{D}}\in\real^{m\times m}$ represents the sensitivity to the relative distance of neighbouring vehicles. 

 \paragraph{Verification of Assumption~\ref{asp: EIP}}

We start by verifying the uniqueness of $\bm{x}^*$ for any $\bm{u}^*\in \real^m$. 
At the equilibrium, the right side of~\eqref{eq:traffic_dyn_node} equals zero gives 
\begin{equation}\label{eq: equ_vehicle}
\left(-(\bm{y}^*-\bm{\lambda}^{0})+\hat{\bm{\rho} }\left(\bm{u}^*-\bm{E}\hat{\bm{D}}\bm{E}^\top\bm{\zeta}^*\right)\right)=\mathbbold{0}_m  \text{  and  } \bm{\Gamma } \bm{y}^*=\mathbbold{0}_m  .  
\end{equation} 
For a given $\bm{u}^*\in\real^m$, suppose there exists $\bm{x}_a^*=(\bm{\zeta}_a^*,\bm{y}_a^*)$ and $\bm{x}_b^*=(\bm{\zeta}_b^*,\bm{y}_b^*)$, $\bm{x}_a^*\neq \bm{x}_b^*$ such that~\eqref{eq: equ_vehicle} holds. Plugging in~\eqref{eq: equ_vehicle} gives
\begin{subequations}
    \begin{align}
     &(\bm{y}_a^*-\bm{y}_b^*)+\hat{\bm{\rho}}\bm{E}\hat{\bm{D}}\bm{E}^\top(\bm{\zeta}_a^*-\bm{\zeta}_b^*)=\mathbbold{0}_m \label{subeq: equ2_vehivle_y}  \\
     &\bm{\Gamma } (\bm{y}_a^*-\bm{y}_b^*)=\mathbbold{0}_m .\label{subeq: equ2_vehivle_eta} 
    \end{align}
\end{subequations}
Note that $\bm{\Gamma }\bm{E}=\bm{E}$. Left multiplying~\eqref{subeq: equ2_vehivle_y} with $(\bm{E}\hat{\bm{D}}\bm{E}^\top(\bm{\zeta}_a^*-\bm{\zeta}_b^*))^\top\bm{\Gamma }$ yields $(\bm{E}\hat{\bm{D}}\bm{E}^\top(\bm{\zeta}_a^*-\bm{\zeta}_b^*))^\top\hat{\bm{\rho}}\bm{E}\hat{\bm{D}}\bm{E}^\top(\bm{\zeta}_a^*-\bm{\zeta}_b^*)=0$, which holds if and only if $\bm{E}\hat{\bm{D}}\bm{E}^\top(\bm{\zeta}_a^*-\bm{\zeta}_b^*)=\mathbbold{0}_m$ since $\hat{\bm{\rho}}\succ 0$. Hence, $(\bm{y}_a^*-\bm{y}_b^*)=-\hat{\bm{\rho}}\bm{E}\hat{\bm{D}}\bm{E}^\top(\bm{\zeta}_a^*-\bm{\zeta}_b^*)=\mathbbold{0}_m$ and therefore $\bm{y}_a^*=\bm{y}_b^*$.  
Note that $ker(\bm{E}\hat{\bm{D}}\bm{E}^\top)=Im(\mathbbold{1}_m)$ and $ Im(\Gamma)\perp  Im(\mathbbold{1}_m)$, thus $(\bm{\zeta}_a^*-\bm{\zeta}_b^*)\perp  Im(\mathbbold{1}_m)$. Hence, $\bm{E}\hat{\bm{D}}\bm{E}^\top(\bm{\zeta}_a^*-\bm{\zeta}_b^*)=\mathbbold{0}_m$ if and only if $\bm{\zeta}_a^*=\bm{\zeta}_b^*$. Theorefore, for every equilibrium $\bm{u}^*\in \real^m$, there is a unique $\bm{x}^*=(\bm{\zeta}^*, \bm{y}^*)\in\real^n$ such that $\bm{f}(\bm{x}^*, \bm{u}^*)=\mathbbold{0}_n$.

     Let the storage function be $S\left(\bm{x}, \bm{x}^*\right) = \frac{1}{2}(\bm{y}-\bm{y}^*)^\top\hat{\bm{\kappa}}^{-1}\hat{\bm{\rho} }^{-1}(\bm{y}-\bm{y}^*)+\frac{1}{2}\bm{\zeta}^\top\bm{E}\hat{\bm{D}}\bm{E}^\top\bm{\zeta}$. Then 
    \begin{equation*}
        \begin{split}
        \dot{S}\left(\bm{x}, \bm{x}^*\right)
        &=(\bm{y}-\bm{y}^*)^\top\hat{\bm{\rho} }^{-1}\hat{\bm{\kappa}}^{-1}\dot{\bm{y}}+\bm{\zeta}^\top\bm{E}\hat{\bm{D}}\bm{E}^\top\dot{\bm{\zeta}}\\
        &=(\bm{y}-\bm{y}^*)^\top\hat{\bm{\rho} }^{-1}\left(-(\bm{y}-\bm{\lambda}^{0})+\hat{\bm{\rho} }\left(\bm{u}-\bm{E}\hat{\bm{D}}\bm{E}^\top\bm{\zeta}\right)\right)+\bm{\zeta}^\top\bm{E}\hat{\bm{D}}\bm{E}^\top\bm{\Gamma}\bm{y}\\
        &\stackrel{\circled{1}}{=}-\hat{\bm{\rho} }^{-1}||\bm{y}-\bm{y}^*||_2^2+\left(\bm{y}-\bm{y}^*\right)^\top\left(\bm{u}-\bm{u}^*\right)
        \\
        &\stackrel{\circled{2}}{\leq}-(\min_i\rho_i^{-1})||\bm{y}-\bm{y}^*||_2^2+\left(\bm{y}-\bm{y}^*\right)^\top\left(\bm{u}-\bm{u}^*\right)
        \end{split}
    \end{equation*}
    where $\circled{1}$ follows from $\left(-(\bm{y}^*-\bm{\lambda}^{0})+\hat{\bm{\rho} }\left(\bm{u}^*-\bm{E}\hat{\bm{D}}\bm{E}^\top\bm{\zeta}^*\right)\right)=\mathbbold{0}_m$ and $\bm{E}^\top \bm{y}^*=\bm{E}^\top \bm{\Gamma} \bm{y}^*=\mathbbold{0}_e$ by definition of equilibrium. The inequality  $\circled{2}$ follows from $\hat{\bm{\rho} }>0$. Therefore, the dynamics in~\eqref{eq:traffic_dyn_node} satisfy conditions in Assumption~\ref{asp: EIP}.

\subsubsection{Simulation and Visualization}
\paragraph{Simulation and training setup}
We adopt the model setup and parameters in~\cite{coogan2014dissipativity,burger2014duality}. The number of vehicles is $m=20$. The sensitivity parameter is $\kappa_i=1$ for all vehicles. The parameters $\lambda_{i}^{0}$ and $\rho_i$ are randomly generated  by $\lambda_{i}^{0}\sim \texttt{uniform}[5,6]$ and $\rho_i\sim \texttt{uniform}[1,2]$, respectively. We train for 400 epochs, where each epoch trains with 300 trajectories with initial velocities $y_i(0) \sim \texttt{uniform}[5,6]$ and initial integral variable $s_i(0)=0$. The  stepsize in time is set as $\Delta t= 0.02s$ and for $K=300$ steps in a trajectory (Namely, each trajectory evolves 6s).  

We implement control law in $\bm{u}$ to realize a specific output agreement at $\bar{\bm{y}}$ and reduce the  transient cost. The transient cost is set to be $ J(\bm{y},\bm{u}) = \sum_{k=1}^{K}|| \bm{y}(k\Delta t)-\Bar{\bm{y}}||_1+\hat{\bm{c}} ||\bm{u}(k\Delta t)||_2^2$, 
where $\hat{\bm{c}}=\text{diag}(c_1,\cdots,c_m) $ with  $c_i\sim \texttt{uniform}[0.025,0.075]$.  
The loss function in training is set to be the same as $J(\bm{y},\bm{u})$, such that neural networks are optimized to reduce transient cost through training. 
The neural PI controller can be trained by most model-based or model-free algorithms,
 and we use the model-based framework in~\cite{cui2022tps,drgona2020learning} 
by embedding the system dynamic model in the computation graph shown in Figure~\ref{fig:computation_graph_training} and training Neural-PI by gradient descent through $J(\bm{y},\bm{u})$. 

\paragraph{Controller performances.}

 We compare the performance of 1) Neural-PI: the learned structured Neural-PI controllers parametrized by~\eqref{eq:PI_SCNN} with three layers and 20 neurons in each hidden layer.  The neural networks are updated using Adam with learning rate initializes at 0.05 and decays every 50 steps with a base of 0.7.
2) DenseNN-PI: The proportional and integral terms are parameterized by dense neural networks~\eqref{eq:ICNN} with three layers, 20 neurons in each hidden layer, and unconstrained weights.  The neural networks are updated using Adam with learning rate initializes at 0.035 and decays every 50 steps with a base of 0.7.  3) Linear-PI: linear PI control  where $\bm{p}(\bar{\bm{y}} - \bm{y}):=\bm{K}_P(\bar{\bm{y}} - \bm{y}) $,  $\bm{r}(\bm{s}):=\bm{K}_I(\bm{s}) $ with $\bm{K}_P$ and $\bm{K}_I$ being the trainable proportional and integral coefficients.  The  coefficients are updated using Adam with learning rate initializes at 0.03 and decays every 50 steps with a base of 0.7. All of them have no communication constraints. All of the controllers are trained using 5 random seeds. The training time is shown in Table~\ref{tab: time_vehicle}.

\begin{table}[H]
  \caption{Training time for vehicle platoon control }
  \label{tab: time_vehicle}
  \centering
  \begin{tabular}{lll}
    \toprule
    Method     & Average Training time (s)     & Standard Deviation (s)\\
    \midrule
    Neural-PI & 5232.36  & 30.55     \\
    DenseNN-PI      & 3567.01 & 16.28     \\
   Linear-PI     &  1836.93    & 10.09 \\
    \bottomrule
  \end{tabular}
\end{table}

\begin{figure}[ht]
\centering
\subfloat[Training Loss]{\includegraphics[width=2.5
in]{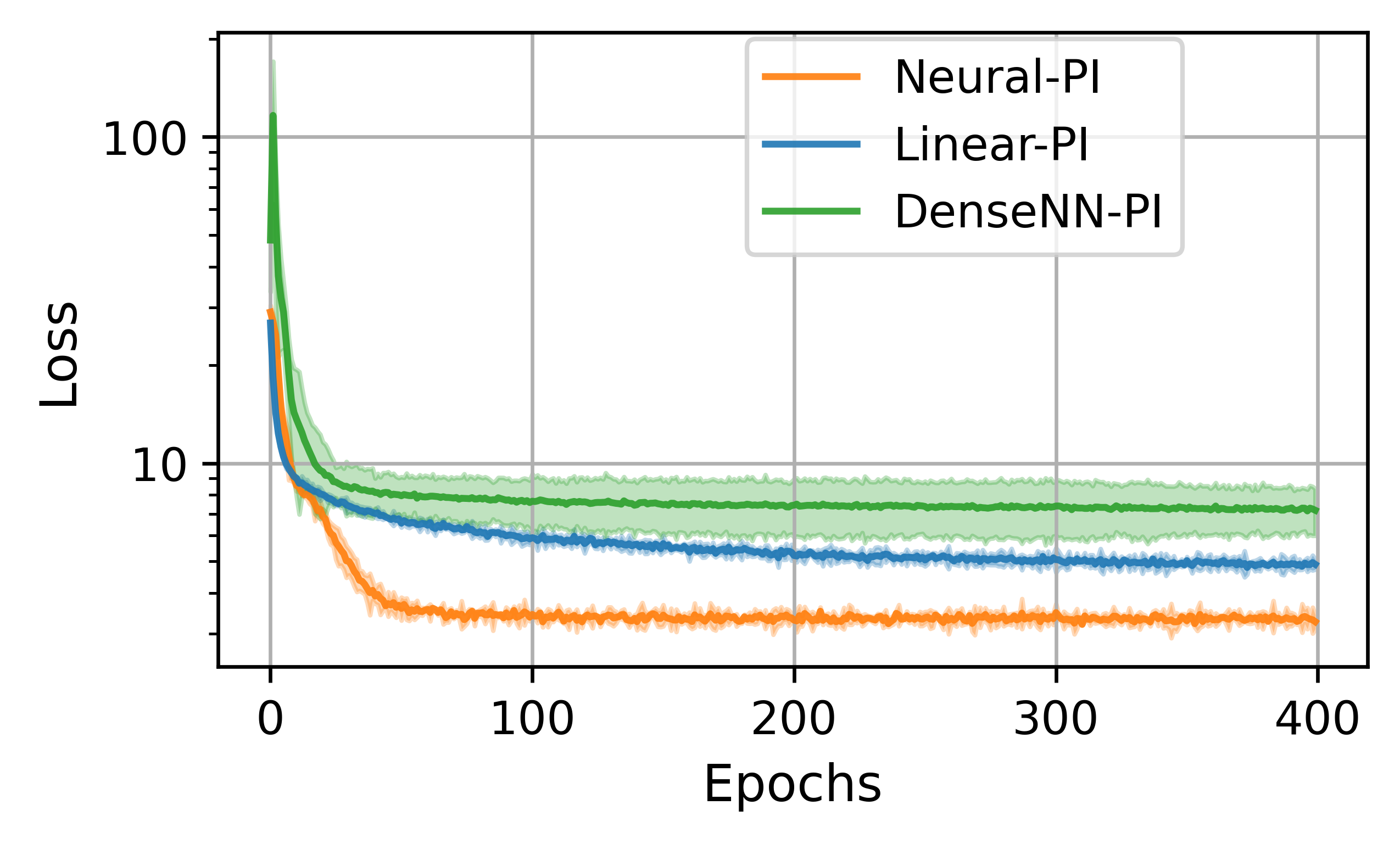}%
\label{subfig:Loss_epi_vehicel}}
\hfil
\subfloat[Average transient and steady cost ]{\includegraphics[width=2.5in]{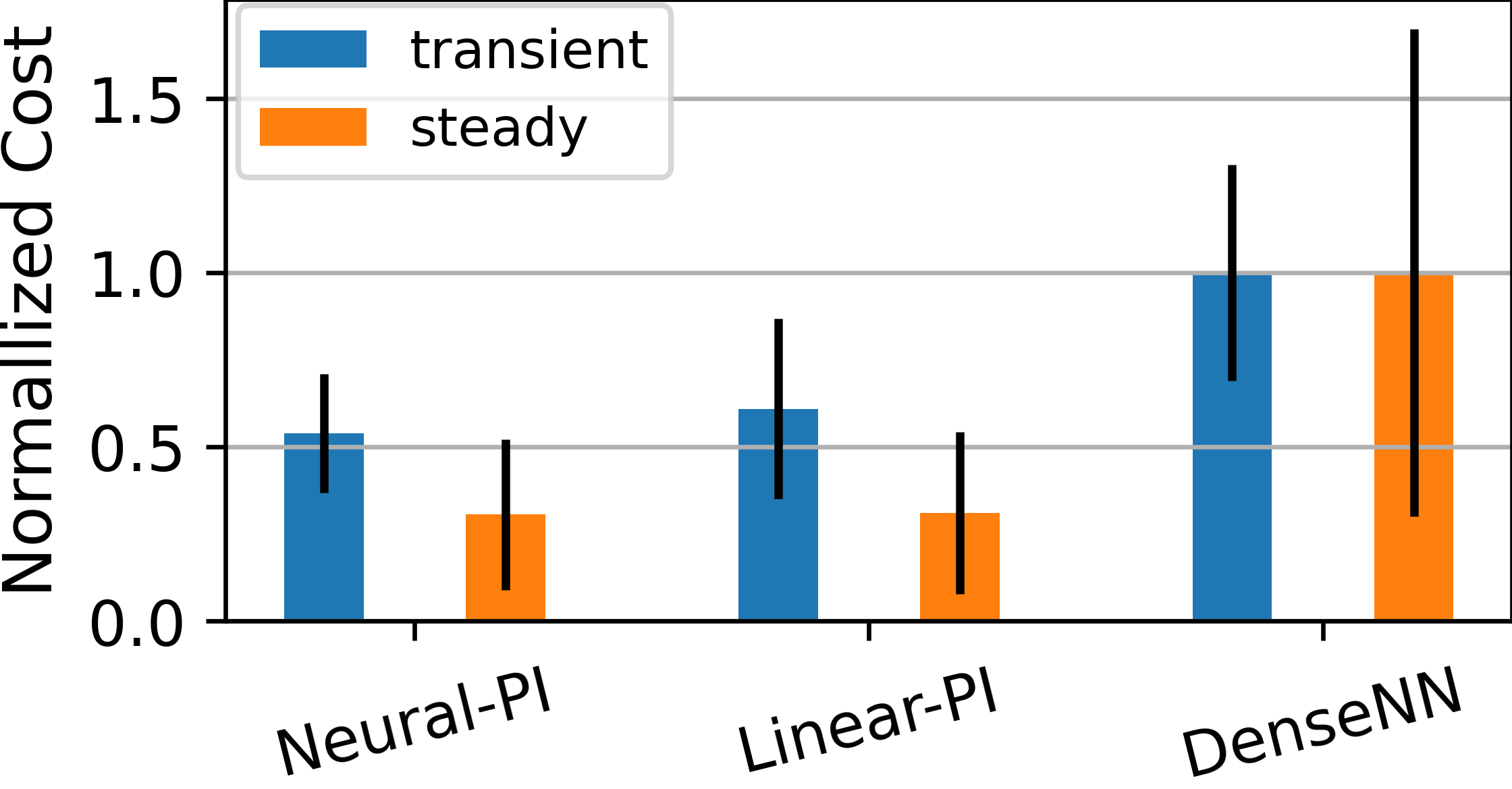}}%
\caption{\small (a) The average batch loss during epochs of training with 5 seeds. All converge, with the Neural-PI  achieving the lowest cost. (b) The average transient cost
and steady-state cost
with error bar on 100 testing trajectories starting from randomly generated  initial states. Neural-PI achieves  a transient cost that is much lower than others. DenseNN without structured design has both high costs in transient and steady-state performances.}
\label{fig:Loss_epi_vehicel}
\end{figure}

The average batch loss  during epochs of training with 5 seeds is shown in Figure~\ref{fig:Loss_epi_vehicel}(a). All of the three methods converge, with the Neural-PI achieves the lowest cost.
Figure~\ref{fig:Loss_epi_vehicel}(b) shows the transient and steady-state cost on  100 testing trajectories starting from randomly generated  initial states.  The steady-state cost  is $C(\bm{y}, \bm{u}) =|| \bm{y}(15)-\Bar{\bm{y}}||_1+\hat{\bm{c}} ||\bm{u}(15)||_2^2$, where we use the variables at the time $t=15s$  since the dynamics approximately enter the steady state after $t=15s$  as we will show later in simulation.  Neural-PI and Linear-PI have the lowest steady-state cost, and the output reaches $\Bar{\bm{y}}$ as guaranteed by Theorem~\ref{thm: output_level_stable}.
  Neural-PI also achieves a transient cost that is much lower than others. 
 By contrast, DenseNN-PI without structured design has both high costs in transient and steady-state performances.

\begin{figure}[ht]
\centering
\subfloat[Neural-PI]{\includegraphics[width=4.5
in]{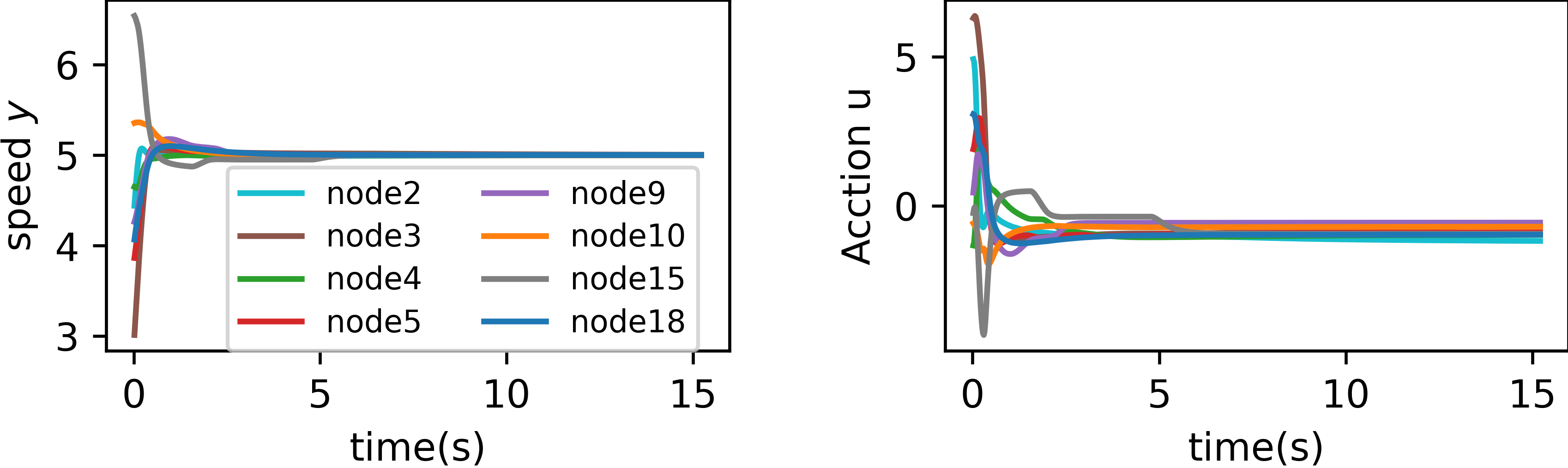}}
\hfil
\subfloat[Linear-PI ]{\includegraphics[width=4.5in]{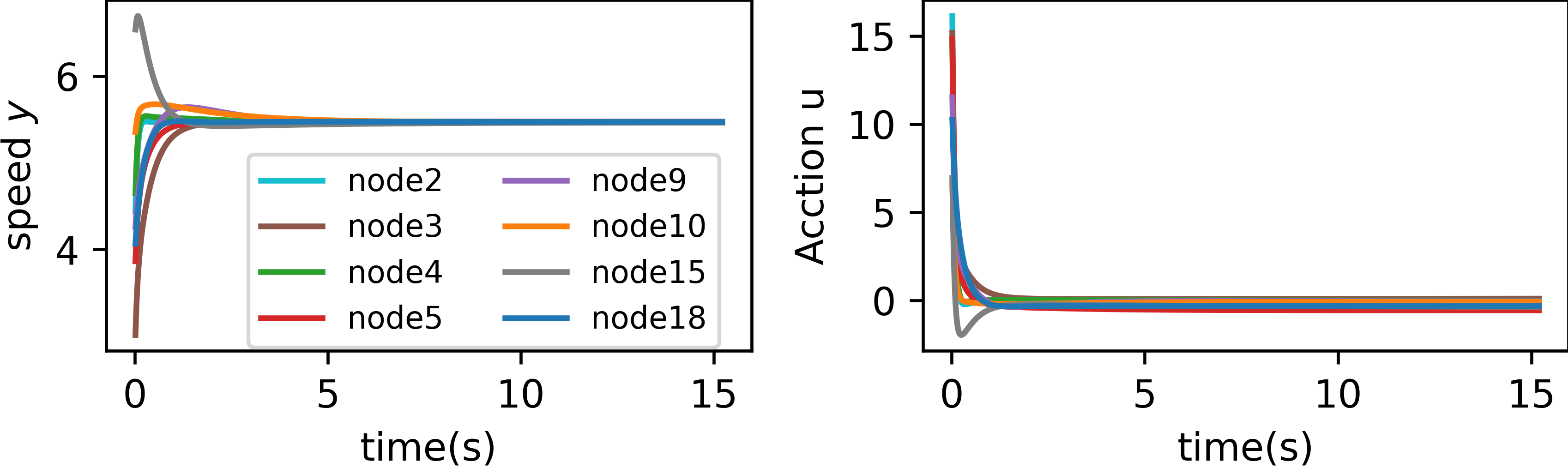}}
\hfil
\subfloat[DenseNN-PI ]{\includegraphics[width=4.5in]{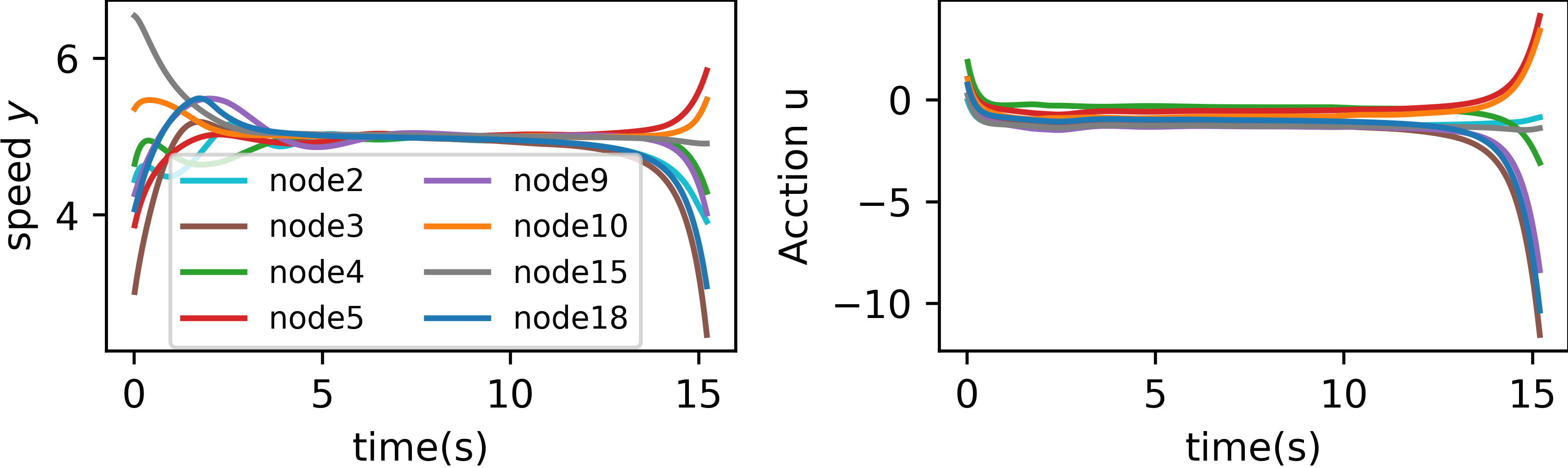}}
\caption{\small Dynamics  of velocity  $\bm{y}$ and  control action $\bm{u}$ with  $\Bar{y}=5$m/s. (a) Neural-PI stabilizes to  $\Bar{y}$ quickly. 
(b) Linear-PI achieves output tracking with high control effort. (c) DenseNN leads to unstable behavior. }
\label{fig:Dynamic_Behavior_vehicle}
\end{figure}

Given $\Bar{y}=5$m/s, Figure~\ref{fig:Dynamic_Behavior_vehicle} shows the dynamics of velocity  $\bm{y}$ and  control action $\bm{u}$ on 8 nodes under the three methods. As guaranteed by Theorem~\ref{thm: output_level_stable}, Neural-PI in Figure~\ref{fig:Dynamic_Behavior_vehicle}(a) and Linear-PI in Figure~\ref{fig:Dynamic_Behavior_vehicle}(b) reaches the required speed $\Bar{y}=5$m/s.  However, Linear-PI has slower convergence and much larger control efforts compared with Neural-PI. 
Even though DenseNN-PI achieves finite loss both in training and testing, Figure~\ref{fig:Dynamic_Behavior_vehicle}(c) actually exhibits unstable behaviors.  
 In particular, DenseNN-PI appears to be stable until about 10s, but states blows up quickly after that. Therefore, enforcing stabilizing structures  is essential.

\subsection{Power systems frequency control}\label{subsec: exp_power}
The second experiment is the power system frequency control  on the IEEE $39$-bus New England system~\cite{athay1979practical} shown in Figure~\ref{fig:ieee39}, where  $\bm{u}\in\real^m$ is the control signal to adjust the power injection from generators and the output $\bm{y}\in\real^m$ is the rotating speed  (i.e., frequency)  of generators. 
The objective is to stabilize generators at the required frequency $ \Bar{\bm{y}}=60$Hz at the steady state while minimizing the transient control cost. 
The state is $\bm{x}=(\bm{\delta},\bm{y})$, where $\bm{\delta}\in\real^m$ is the rotating angle of generators in the center-of-inertia coordinates with $\bm{\delta}(0) \perp  Im(\mathbbold{1}_m)$~\cite{weitenberg2018exponential}.
The model of power systems reflects the transmission of electricity  from generators to loads through power transmission lines and is represented as follows:
\begin{equation}\label{eq:power_dyn_node}
\begin{split}
\dot{\bm{\delta}}&=\bm{\Gamma }\bm{y},\\
\hat{\bm{M}}\dot{\bm{y}}&=-\hat{\bm{D}}(\bm{y}-\Bar{\bm{y}})-\bm{d}+\bm{u}-\bm{E}\hat{\bm{b}}\sin(\bm{E}^\top\bm{\delta}),
\end{split}
\end{equation}
where $\hat{\bm{M}}=\text{diag}(M_1,\cdots,M_m)$,   $\hat{\bm{D}}=\text{diag}(D_1,\cdots,D_m)$ with $M_j>0$ and $D_j>0$ being the inertia and damping constant of generator $j$, respectively. The vector $\bm{d}$ is the net load of the system. The matrix $\bm{E}\in\real^{m\times e}$ is the incidence matrix
corresponding to the topology of the power network with $e$ transmission lines and satisfying $ker(\bm{E}^\top)=Im(\mathbbold{1}_m)$.  The matrix $\bm{\Gamma }:=\bm{I}_m-\frac{1}{m}\mathbbold{1}_m\mathbbold{1}_m^\top$  extracts the relative rotating speed of generators by $\bm{\Gamma }\bm{y}$.  The diagonal matrix $\hat{\bm{b}}=\text{diag}(b_1,\cdots,b_e)\in\real^{e\times e}$ with $b_j>0$ being the susceptance of the $j$-th transmission line.


\begin{figure}[H]	
	\centering
	\includegraphics[width=4in]{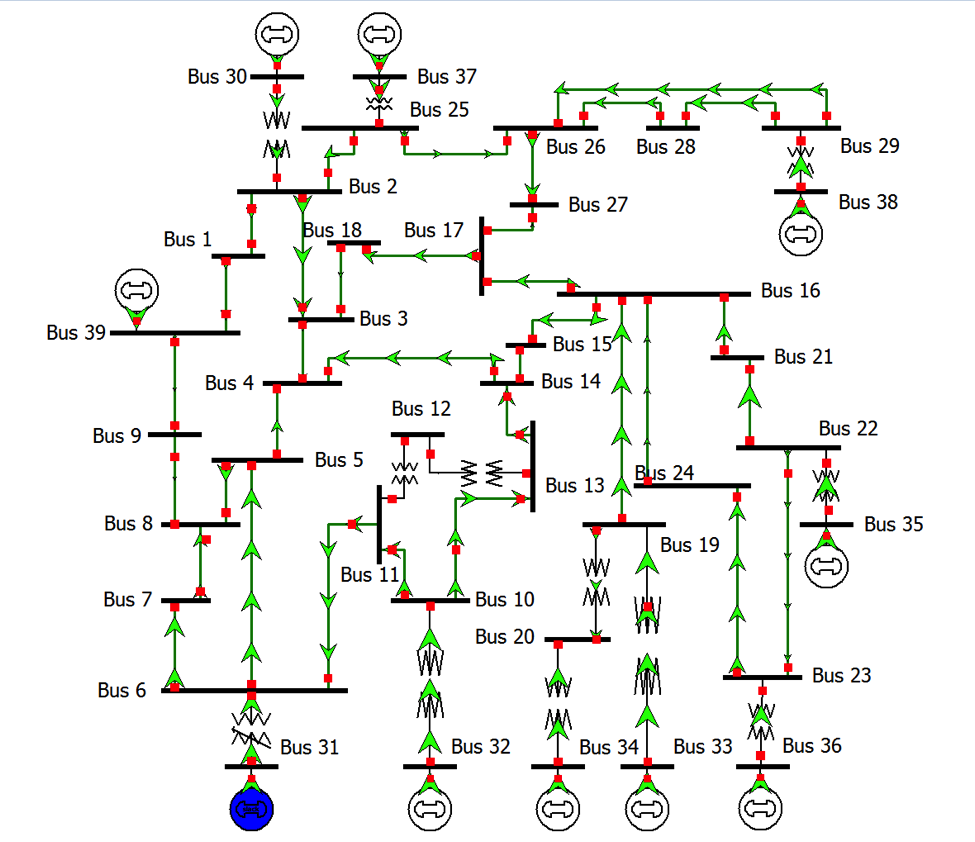}
	\caption{IEEE 39-bus test system~\cite{athay1979practical}}
	\label{fig:ieee39}
\end{figure}

We adopt a common assumption in literature that the power system operates with $\bm{\delta}$ satisfying $\mathcal{H}=\left\{\bm{\delta}|[\bm{E}^\top\bm{\delta}]_j\in (-\pi/2,\pi/2) \, \forall j=1,\cdots, e\right\}$, where $[\bm{E}^\top\bm{\delta}]_j$ is the angle difference between the generators in head and tail of the $j$-th transmission line~\cite{weitenberg2018robust, DORFLER2017Auto, sauer2017power}. This range is sufficiently large to include almost all practical scenarios~\cite{weitenberg2018robust, DORFLER2017Auto, sauer2017power}. 

 \paragraph{Verification of Assumption~\ref{asp: EIP}}

At the equilibrium, the right side of~\eqref{eq:traffic_dyn_node} equals zero gives 
\begin{equation}\label{eq: equ_power}
-\hat{\bm{D}}(\bm{y}^*-\Bar{\bm{y}})-\bm{d}+\bm{u}^*-\bm{E}\hat{\bm{b}}\sin(\bm{E}^\top\bm{\delta}^*)=\mathbbold{0}_m  \text{  and  } \bm{\Gamma } \bm{y}^*=\mathbbold{0}_m  .  
\end{equation} 
We start by verifying the uniqueness of $\bm{x}^*$ for any $\bm{u}^*\in \mathcal{U}$ where~\eqref{eq: equ_power} has a feasible solution such that $\bm{\delta}\in\mathcal{H}$. 
For a given $\bm{u}^*\in\mathcal{U}$, suppose there exists $\bm{x}_a^*=(\bm{\delta}_a^*,\bm{y}_a^*)$ and $\bm{x}_b^*=(\bm{\delta}_b^*,\bm{y}_b^*)$, $\bm{x}_a^*\neq \bm{x}_b^*$ such that~\eqref{eq: equ_power} holds.
Plugging in~\eqref{eq: equ_power} gives
\begin{subequations}
    \begin{align}
     &\hat{\bm{D}}(\bm{y}_a^*-\bm{y}_b^*)+\bm{E}\hat{\bm{b}}\left(\sin(\bm{E}^\top\bm{\delta}_a^*)-\sin(\bm{E}^\top\bm{\delta}_b^*)\right)=\mathbbold{0}_m \label{subeq: equ2_power_y}  \\
     &\bm{\Gamma } (\bm{y}_a^*-\bm{y}_b^*)=\mathbbold{0}_m .\label{subeq: equ2_power_eta} 
    \end{align}
\end{subequations}
Note that $\bm{\Gamma }\bm{E}=\bm{E}$. Left multiplying~\eqref{subeq: equ2_power_y} with $(\bm{E}\hat{\bm{b}}\left(\sin(\bm{E}^\top\bm{\delta}_a^*)-\sin(\bm{E}^\top\bm{\delta}_b^*)\right))^\top\bm{\Gamma }\hat{\bm{D}}^{-1}$ yields $(\bm{E}\hat{\bm{b}}\left(\sin(\bm{E}^\top\bm{\delta}_a^*)-\sin(\bm{E}^\top\bm{\delta}_b^*)\right))^\top\hat{\bm{D}}^{-1}(\bm{E}\hat{\bm{b}}\left(\sin(\bm{E}^\top\bm{\delta}_a^*)-\sin(\bm{E}^\top\bm{\delta}_b^*)\right))=0$, which holds if and only if $(\bm{E}\hat{\bm{b}}\left(\sin(\bm{E}^\top\bm{\delta}_a^*)-\sin(\bm{E}^\top\bm{\delta}_b^*)\right))=\mathbbold{0}_m$ since $\hat{\bm{D}}^{-1}\succ 0$. Plugging in~\eqref{subeq: equ2_power_y} gives $\hat{\bm{D}}(\bm{y}_a^*-\bm{y}_b^*)=\mathbbold{0}_m $, which holds if and only if $\bm{y}_a^*=\bm{y}_b^*$ since $D_i>0$ for all $i=1,\cdots,m$.


Left multiplying $(\bm{E}\hat{\bm{b}}\left(\sin(\bm{E}^\top\bm{\delta}_a^*)-\sin(\bm{E}^\top\bm{\delta}_b^*)\right))=\mathbbold{0}_m$ with $(\bm{\delta}_a^*-\bm{\delta}_b^*))^\top$ yields 
\begin{equation}\label{eq:equ_power_delta}
    \begin{split}
     0=&\left(\bm{E}^\top\bm{\delta}_a^*-\bm{E}^\top\bm{\delta}_b^*\right)^\top\hat{\bm{b}}\left(\sin(\bm{E}^\top\bm{\delta}_a^*)-\sin(\bm{E}^\top\bm{\delta}_b^*)\right)\\
     &=\sum_{j=1}^e b_j \left([\bm{E}^\top\bm{\delta}_a^*]_j-[\bm{E}^\top\bm{\delta}_b^*)]_j\right)
\left(\sin([\bm{E}^\top\bm{\delta}_a^*]_j)-\sin([\bm{E}^\top\bm{\delta}_b^*)]_j)\right).  
    \end{split}
\end{equation}
Since $b_j>0$ and $\sin(\cdot)$ is strictly increasing in $(-\pi/2,\pi/2)$,~\eqref{eq:equ_power_delta} holds if and only if $\bm{E}^\top(\bm{\delta}_a^*-\bm{\delta}_a^*)=\mathbbold{0}_e$. Note that $ Im(\Gamma)\perp  Im(\mathbbold{1}_m)$, thus $(\bm{\delta}_a^*-\bm{\delta}_b^*)\perp  Im(\mathbbold{1}_m)$. Hence,~\eqref{eq:equ_power_delta} holds if and only if $\bm{\delta}_a^*=\bm{\delta}_b^*$.  Therefore, for every equilibrium $\bm{u}^*\in \mathcal{U}$, there is a unique $\bm{x}^*=(\bm{\delta}^*, \bm{y}^*)\in\real^n$ such that $\bm{f}(\bm{x}^*, \bm{u}^*)=\mathbbold{0}_n$.

    Let the storage function be $S\left(\bm{x}, \bm{x}^*\right) = \frac{1}{2}(\bm{y}-\bm{y}^*)^\top\hat{\bm{M} }(\bm{y}-\bm{y}^*)-\mathbbold{1}_e^\top\hat{\bm{b}}(\cos(\bm{E}^\top\bm{\delta})-\cos(\bm{E}^\top\bm{\delta}^*))-(\bm{E}\hat{\bm{b}}\sin(\bm{E}^\top\bm{\delta}^*) )^\top(\bm{\delta}-\bm{\delta}^*))$. Note that $-\mathbbold{1}_e^\top\hat{\bm{b}}(\cos(\bm{E}^\top\bm{\delta})$ is strictly convex in $\mathcal{H}$, thus the Bregman distance $-\mathbbold{1}_e^\top\hat{\bm{b}}(\cos(\bm{E}^\top\bm{\delta})-\cos(\bm{E}^\top\bm{\delta}^*))-(\bm{E}\hat{\bm{b}}\sin(\bm{E}^\top\bm{\delta}^*) )^\top(\bm{\delta}-\bm{\delta}^*))\geq 0$ with equality holds only when $\bm{\delta}=\bm{\delta}^*$.
    
    The time derivative is
    \begin{equation*}
        \begin{split}
        \dot{S}\left(\bm{x}, \bm{x}^*\right)
        &=(\bm{y}-\bm{y}^*)^\top\hat{\bm{M} }\dot{\bm{y}}+(\bm{E}\hat{\bm{b}}\sin(\bm{E}^\top\bm{\delta})-\bm{E}\hat{\bm{b}}\sin(\bm{E}^\top\bm{\delta}^*))^\top\dot{\bm{\delta}}\\
        &=(\bm{y}-\bm{y}^*)^\top (-\hat{\bm{D}}(\bm{y}-\Bar{\bm{y}})-\bm{d}+\bm{u}-\bm{E}\hat{\bm{b}}\sin(\bm{E}^\top\bm{\delta}))\\
        &\quad +(\bm{E}\hat{\bm{b}}\sin(\bm{E}^\top\bm{\delta})-\bm{E}\hat{\bm{b}}\sin(\bm{E}^\top\bm{\delta}^*))^\top\bm{\Gamma}\bm{y}\\
        &\quad- (\bm{y}-\bm{y}^*)^\top\underbrace{ (-\hat{\bm{D}}(\bm{y}^*-\Bar{\bm{y}})-\bm{d}+\bm{u}^*-\bm{E}\hat{\bm{b}}\sin(\bm{E}^\top\bm{\delta}^*))}_{=\mathbbold{0}_m}\\
        &\stackrel{\circled{1}}{=}-(\bm{y}-\bm{y}^*)^\top \hat{\bm{D}} (\bm{y}-\bm{y}^*) +\left(\bm{y}-\bm{y}^*\right)^\top\left(\bm{u}-\bm{u}^*\right)\\
        &\quad -(\bm{y}^*)^\top(\bm{E}\hat{\bm{b}}\sin(\bm{E}^\top\bm{\delta})-\bm{E}\hat{\bm{b}}\sin(\bm{E}^\top\bm{\delta}^*))
        \\
        &\stackrel{\circled{2}}{\leq}-(\min_i D_i)||\bm{y}-\bm{y}^*||_2^2+\left(\bm{y}-\bm{y}^*\right)^\top\left(\bm{u}-\bm{u}^*\right)
        \end{split}
    \end{equation*}
    where $\circled{1}$ follows from $(-\hat{\bm{D}}(\bm{y}^*-\Bar{\bm{y}})-\bm{d}+\bm{u}^*-\bm{E}\hat{\bm{b}}\sin(\bm{E}^\top\bm{\delta}^*))=\mathbbold{0}_m$  by definition of equilibrium. The relation $\circled{2}$ follows from $\bm{E}^\top \bm{y}^*=\bm{E}^\top \bm{\Gamma} \bm{y}^*=\mathbbold{0}_e$
    and $D_i>0$ for all $i=1,\cdots, m$. Therefore, the dynamics~\eqref{eq:power_dyn_node} of the power system frequency control satisfies conditions in Assumption~\ref{asp: EIP}.

\subsubsection{Simulation and Visualization}
\paragraph{Simulation Setup}
We conduct experiments on the IEEE New England 10-machine 39-bus (NE39)
power network with parameters given in~\cite{athay1979practical, cui2022tps}. We implement control law for  power output  $\bm{u}$ of generators to realize the track of frequency at 60Hz and reduce the power generation cost. The state $\bm{\delta}$ is initialized as 
the solution of power flow at the nominal frequency and $\bm{s}$ is initialized as 0.   
The number of epochs and batch size are 400 and 300,  respectively. The step-size in time is set as $\Delta t= 0.01s$ and  the number of time stages in a trajectory for training is $K=400$.

 Apart from the accumulated frequency deviation, an important metric for the frequency control problem is the maximum frequency deviation (also known as the frequency nadir) after a disturbance~\cite{cui2022tps}. Hence, the transient cost is set to be $ J(\bm{y},\bm{u}) =\sum_{i=1}^n\big(\max_{k=1,\cdots,K}| y_i(k\Delta t)-\Bar{y}|+ 0.05\sum_{k=1}^{K}| y_i(k\Delta t)-\Bar{y}|+0.005\sum_{k=1}^{K}(u_i(k\Delta t))^2\big)$. 
The loss function in training is  $J(\bm{y},\bm{u})$, such that neural networks are optimized to reduce transient cost. The neural PI controller can be trained by most model-based or model-free algorithms,
and we use the model-based framework in~\cite{cui2022tps,drgona2020learning} 
by embedding the system dynamic model in the computation graph shown in Figure~\ref{fig:computation_graph_training} and training Neural-PI by gradient descent through $J(\bm{y},\bm{u})$.

Two major goals of this experiment is
\begin{enumerate}[noitemsep,nolistsep,leftmargin=*, label=\arabic*)]
    \item \textit{Verifies the robustness of the controller under parameter changes}.
    Note that the load $\bm{d}$ is a parameter in the dynamics~\eqref{eq:power_dyn_node}.  In particular,  power system operator emphasizes on the ability of the system to withstand a big disturbance such as a step load change. To this end, we train and test controllers by randomly picking at most three generators to have a step load change uniformly distributed in $\texttt{uniform}[-1,1]\,\text{p.u.}$, where 1p.u.=100 MW is the base unit of power for the IEEE-NE39 test system.
    \item \textit{Verifies the performances under communication constraints}. Most systems do not have fully connected real-time communication capabilities, so the controller needs to respect the communication constraints and we show the flexibility of Neural-PI control under different communication structures.
\end{enumerate}

\paragraph{Controller Performances.}
We compare the performance of Neural-PI controller where 1)  all the nodes can communicate 2) half of the nodes can communicate and 3) none of the nodes can communicate (thus the controller is decentralized), respectively.
All neural-PI controllers are parameterized by~\eqref{eq:PI_SCNN} and~\eqref{eq:SCNN_Comm} where each SCNN has three layers and 20 neurons in each hidden layer. The neural networks are updated using Adam with the learning rate initializes at 0.05 and decays every 50 steps with a base of 0.7. We compare against the following two benchmarks where all the nodes can communicate:  4) DenseNN-PI-Full: Dense neural networks~\eqref{eq:ICNN}  with three layers, 20 neurons in each hidden layer, and unconstrained weights.  The neural networks are updated using Adam with a learning rate initializes at 0.01 and decays every 50 steps with a base of 0.7.  Note that DenseNN needs such a small learning rate to let the training converge, the reason is that DenseNN may lead to unstable behaviors that we will see later. 5) Linear-PI-Full: linear PI control  where $\bm{p}(\bar{\bm{y}} - \bm{y}):=\bm{K}_P(\bar{\bm{y}} - \bm{y}) $,  $\bm{r}(\bm{s}):=\bm{K}_I(\bm{s}) $ with $\bm{K}_P$ and $\bm{K}_I$ being the trainable proportional and integral coefficients.  The  coefficients are updated using Adam with the learning rate initializes at 0.08 and decays every 50 steps with a base of 0.7. All of the controllers are trained using 5 random seeds. The training time is shown in Table~\ref{tab: time_power}.

\begin{table}[H]
  \caption{Training time for power system frequency control }
  \label{tab: time_power}
  \centering
  \begin{tabular}{lll}
    \toprule
    Method     & Average Training time (s)     & Standard Deviation (s)\\
    \midrule
    Neural-PI-Full & 4373.52  & 64.58     \\
    Neural-PI-Half & 8034.92  & 115.26     \\     
    Neural-PI-Dec & 23549.34  & 300.95     \\ 
    DenseNN-PI-Full      & 2193.84 & 21.22     \\
   Linear-PI-Full     &  981.65   & 11.19 \\
    \bottomrule
  \end{tabular}
\end{table}

 The average batch loss during epochs of training with 5 seeds is shown in Figure~\ref{fig:Loss_epi_power}(a). All  converge, with the Neural-PI-Full achieving the lowest cost.
Figure~\ref{fig:Loss_epi_power}(b) shows the average transient cost
and steady-state cost
with error bar on 100 testing trajectories subject to random step load changes. 
The steady-state cost  is $C(\bm{y}, \bm{u}) =0.05|| \bm{y}(15)-\Bar{\bm{y}}||_1+0.005||\bm{u}(15)||_2^2$, where we use the variables at the time $t=15s$  since the dynamics approximately enter the steady state after $t=15s$  as we will show later in simulation.
Neural-PI-Full achieves the lowest transient and steady-state cost. Notably, the steady-state cost  significantly decreases with increased communication capability. The reason is that communication serves to better allocated control efforts such that they can maintain output tracking with smaller control costs. 
Again, DenseNN without structured design has  high costs both in transient and in steady state.

With a step load change at 0.5s, Figure~\ref{fig:Dynamic_Behavior_power} shows the dynamics of frequency $\bm{y}$ and  control action $\bm{u}$ on 7 nodes under the five methods. Again, DenseNN-PI-Full in   Figure~\ref{fig:Dynamic_Behavior_power}(e)  exhibits unstable behavior with large oscillations.  As guaranteed by Theorem~\ref{thm: output_level_stable}, Neural-PI in Figure~\ref{fig:Dynamic_Behavior_vehicle}(a-c) reaches the required frequency $\Bar{y}=60$Hz, but the speed of convergence is lower for reduced communication capabilities.    Hence, 
the guarantees provided by the structured Neural-PI controllers  are robust to parameter changes and communication constraints, which have significant practical importance.

\begin{figure}[H]
\centering
\subfloat[Training Loss]{\includegraphics[width=2.5
in]{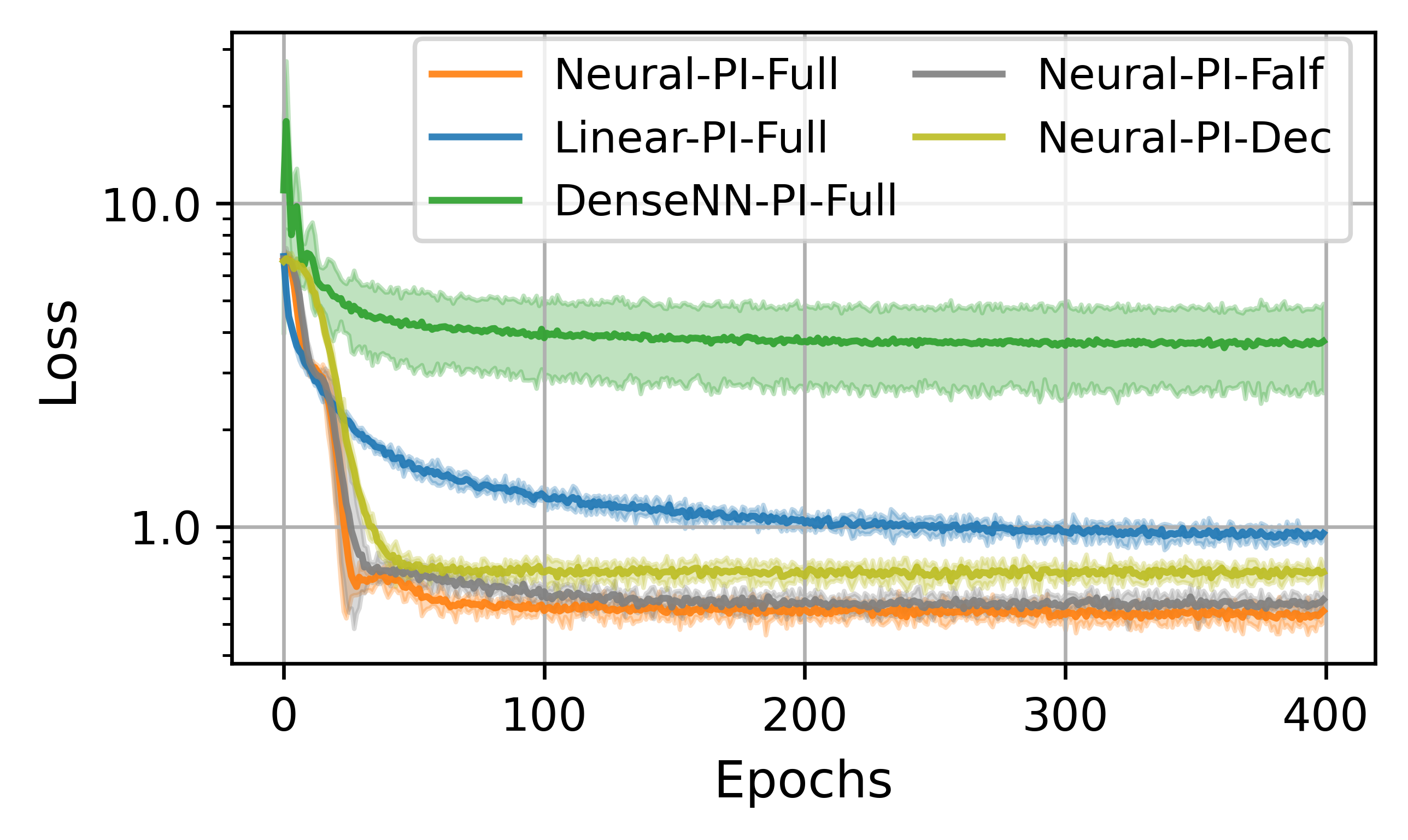}%
\label{fig_first_case_x}}
\hfil
\subfloat[Transient and steady cost ]{\includegraphics[width=2.5in]{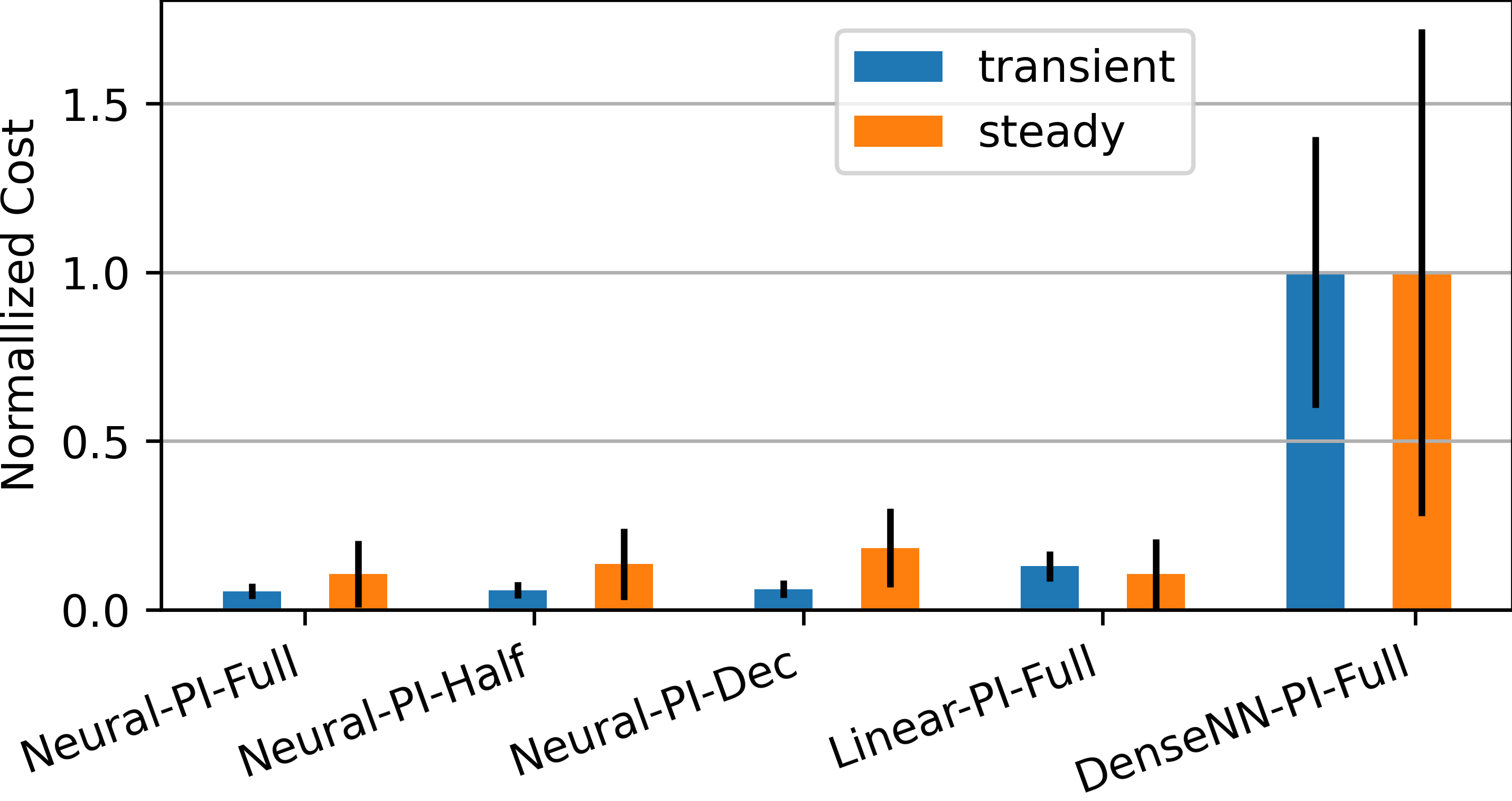}%
\label{fig_second_case_x}}
\caption{\small(a) Average batch loss during epochs of training with 5 seeds. All converge, with the Neural-PI  achieving the lowest cost.  (b)The average transient cost
and steady-state cost
with error bar on 100 testing trajectories subject to random step load changes. Neural-PI achieves  a transient cost that is much lower than others. The steady-state cost  significantly decreases with increased communication capability. DenseNN without structured design has both high costs in transient and steady-state performances.
 }
\label{fig:Loss_epi_power}
\end{figure}

\begin{figure}[ht]
\centering
\subfloat[NeuralPI-Full\vspace{-0.2cm}
]{\includegraphics[width=4.5
in]{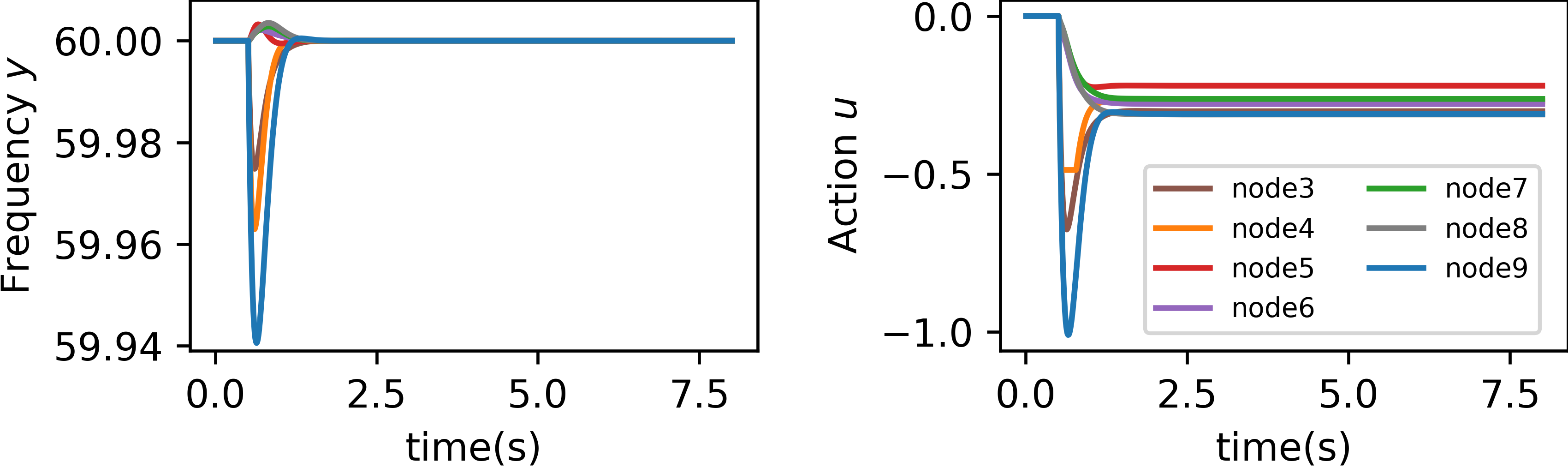}%
}
\hfil
\subfloat[NeuralPI-Half\vspace{-0.2cm}
]{\includegraphics[width=4.5
in]{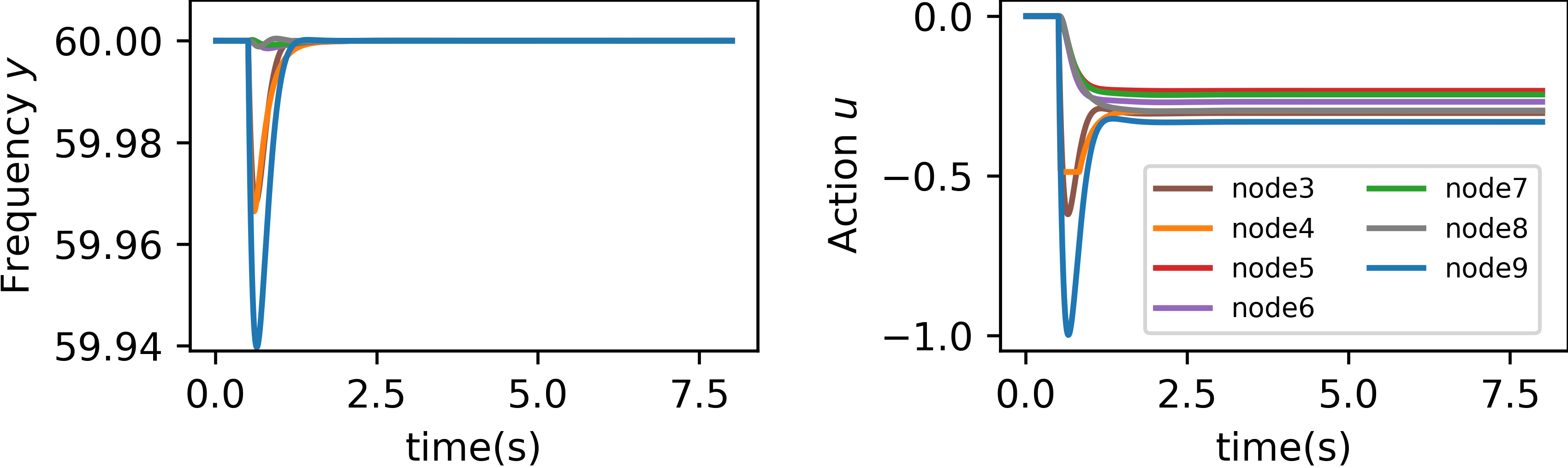}%
}
\hfil
\subfloat[NeuralPI-Dec\vspace{-0.2cm}
]{\includegraphics[width=4.5
in]{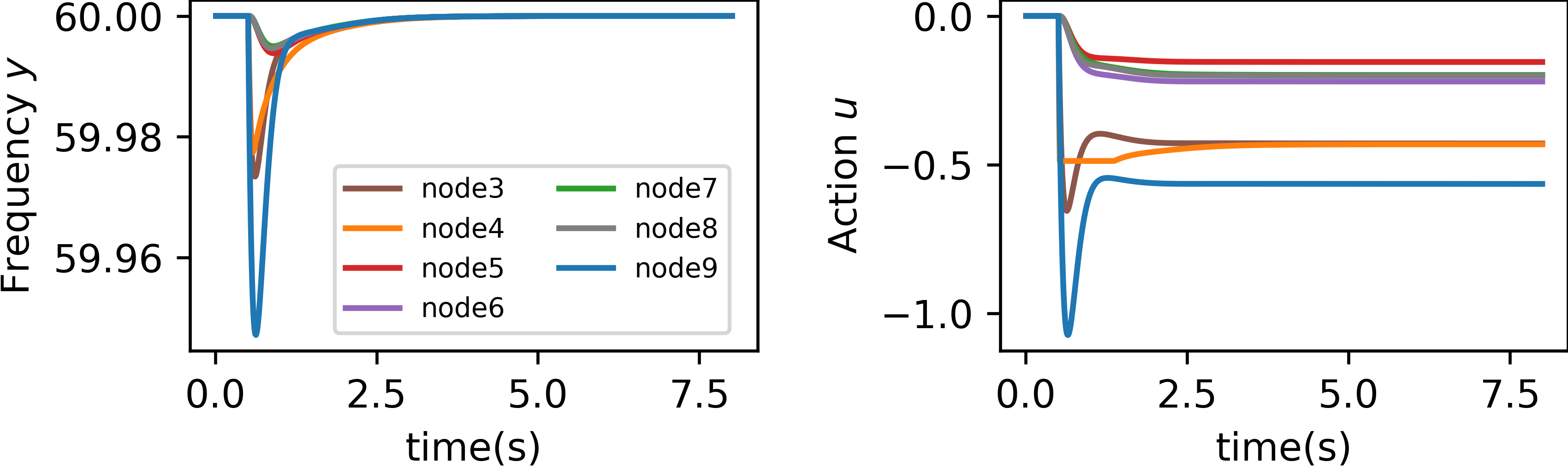}%
}
\hfil
\subfloat[Linear-PI-Full \vspace{-0.2cm}
]{\includegraphics[width=4.5in]{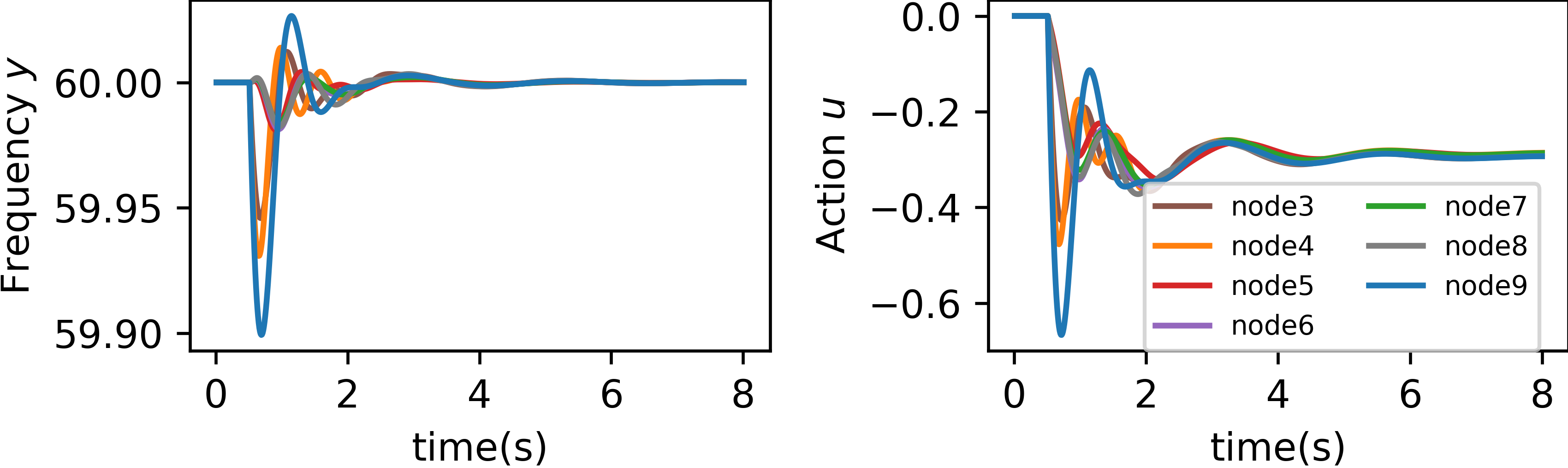}%
}
\hfil
\subfloat[DenseNN-Full \vspace{-0.2cm}
]{\includegraphics[width=4.5in]{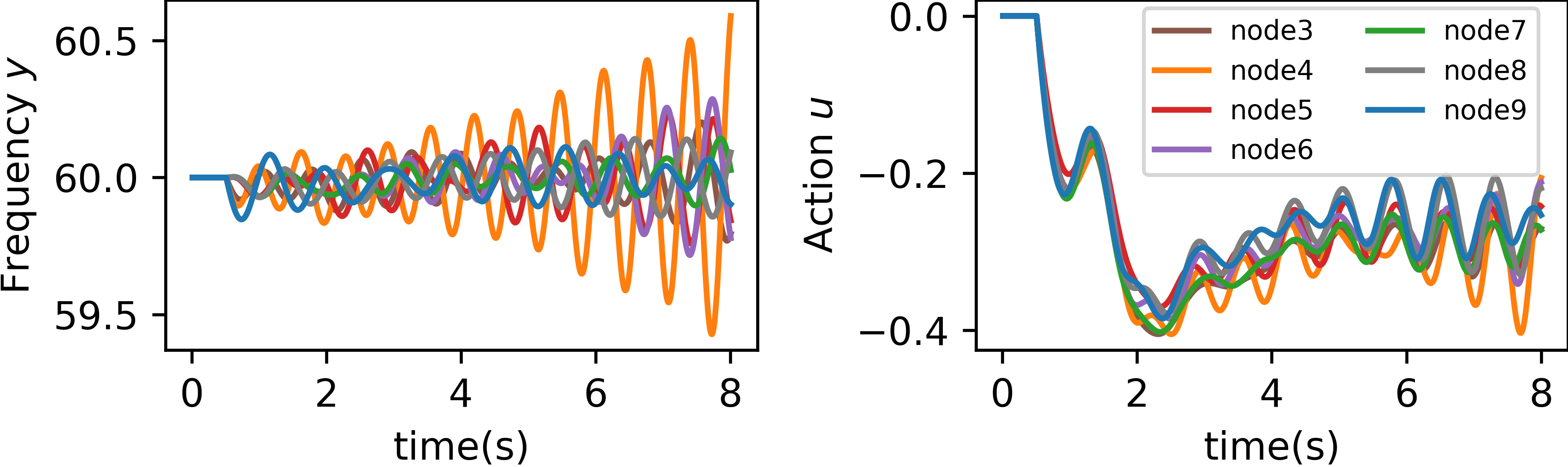}%
}
\caption{\small Dynamics of frequency $\bm{y}$ and  control action $\bm{u}$ on 7 nodes with  $\Bar{y}=60$Hz and  a step load change at 0.5s.  (a) Neural-PI when all nodes can communicate  (b) Neural-PI when  half of nodes can communicate,  (c) Neural-PI when   none nodes can communicate. The control with different communication capability all stabilize the system to the required $\Bar{y}=60$Hz.
(d) Linear-PI-Full is stable but has slower convergence compared with neural network-based approaches.
(e) DenseNN-PI-Full leads to large frequency deviations and oscillations. \vspace{-0.4cm}}
\label{fig:Dynamic_Behavior_power}
\end{figure}

\end{document}